\newcommand{\ie}{i.e.\@\xspace}
\newcommand{\eg}{e.g.\@\xspace}
\newcommand*{\etc}{%
    \@ifnextchar{.}%
        {etc}%
        {etc.\@\xspace}%
}
\newcommand*{\etal}{%
    \@ifnextchar{.}%
        {et al}%
        {et al.\@\xspace}%
}
\newlength{\arrow}
\newcommand*{\goesto}[1]{\xrightarrow{\mathmakebox[\arrow]{#1}}}
\newcommand*{\reaction}[3]{#1 \goesto{#2} #3}
\newcommand{\bx}{\mathbf{x}}
\newcommand{\by}{\mathbf{y}}
\newcommand{\bz}{\mathbf{z}}
\newcommand{\bxn}{\mathbf{x}=(x_1, x_2,\cdots, x_n)}
\newcommand{\bp}{\mathbf{p}}
\DeclarePairedDelimiter\floor{\lfloor}{\rfloor}
\newcommand{\N}{\mathbb{N}}
\newcommand{\R}{\mathbb{R}}
\newcommand{\Q}{\mathbb{Q}}
\newcommand{\pospoly}{\mathbb{P}^{+}}
\newcommand*\coeff{\mathop{C}}
\newcommand*\proj{\mathop{Proj}}
\newcommand*\diff{\mathop{}\!\mathrm{d}}
\newtheorem{thm}{Theorem}
\newtheorem*{motivating-example}{Motivating Example}
\newtheorem*{main-theorem}{Main Theorem}
\newtheorem*{notation}{Notation}
\newtheorem*{convention}{Convention}
\newtheorem*{term}{Terminology}
\newtheorem{remark*}{Remark}
\theoremstyle{remark}
\newtheorem{remark}[thm]{Remark}
\newtheorem{construction}{Construction}
\newtheorem*{old-definition}{Old Definition}
\newtheorem{definition}[thm]{Definition}
\newtheorem{theorem}[thm]{Theorem}
\newtheorem{operation}{Operation}
\newtheorem{lemma}[thm]{Lemma}
\newtheorem{corollary}[thm]{Corollary}
\newtheorem{observation}[thm]{Observation}
\begin{document}
    % Preface

%!TEX root = ../main.tex
%%%%%%%%%%%%%%%%%%%%%%%%%%%%%%%%%%%%%%%%%%%%%%%%%%%%%%%%
\title{Computing Real Numbers with Large-Population Protocols Having a Continuum of Equilibria}

\author{Xiang Huang \\ \href{mailto:xhuan5@uis.edu}{xhuan5@uis.edu}
   \and Rachel N. Huls \\ \href{mailto:rhuls2@uis.edu}{rhuls2@uis.edu}
}
\date{%
     University of Illinois Springfield, Springfield, IL 62703, USA\\
    \today
}

%\address[1]{Iowa State University, Ames, IA 50011 USA}
%\address[2]{University of Illinois Springfield, Springfield, IL 62703, USA}

\maketitle  %% the package require us to do maketitle after abstract
%!TEX root = ../main.tex
\begin{abstract}
Bournez, Fraigniaud, and Koegler \cite{cBoFrKo12} defined a number in [0,1] as computable by their Large-Population Protocol (LPP) model, if the proportion of agents in a set of marked states converges to said number over time as the population grows to infinity. The notion, however, restricts the ordinary differential equations (ODEs) associated with an LPP to have only finitely many equilibria. This restriction places an intrinsic limitation on the model. As a result, a number is computable by an LPP if and only if it is \emph{algebraic}, namely, not a single transcendental number can be computed under this notion.

In this paper, we \emph{lift} the finitary requirement on equilibria. That is, we consider systems with a continuum of equilibria. We show that essentially all numbers in [0,1] that are computable by bounded general-purpose analog computers (GPACs) or chemical reaction networks (CRNs) can also be computed by LPPs under this new definition. This implies a rich series of numbers (e.g., the reciprocal of Euler's constant, $\pi/4$, Euler's $\gamma$, Catalan's constant, and Dottie number) are all computable by LPPs.  Our proof is constructive: We develop an algorithm that transfers bounded GPACs/CRNs into LPPs.  Our algorithm also fixes a gap in Bournez et al.'s construction of LPPs designed to compute any arbitrary algebraic number in [0,1].
\end{abstract}

\begin{keyword}
		Population protocols, Chemical reaction networks, and Analog computation.
\end{keyword}
%\end{IEEEkeywords}

%!TEX root = ../main.tex
%%%%%%%%%%%%%%%%%%%%%%%%%%%%%%%%%%%%%%%%%%%%%%
\section{Introduction}\label{sec:introduction}
%%%%%%%%%%%%%%%%%%%%%%%%%%%%%%%%%%%%%%%%%%%%%%
\subsection{GPAC/CRN Computable Number}
The computation of real numbers provides a theoretical benchmark for an abstract model's computational power. In 1936, Turing \cite{jTuri36} considered computable numbers while researching a discrete model now known as the Turing machine. Then, Shannon \cite{jShan41} introduced the general-purpose analog computer (GPAC) to the world. The GPAC is a mathematical abstraction of Bush's differential analyzer. Bush \cite{jBush31} developed this machine to obtain numerical solutions of ordinary differential equations (ODEs). After much revision \cite{jPouRic74,jLipRub87,jGraCos03,silva2004some}, we can now characterize GPACs by polynomial initial value problems (PIVPs):
\begin{equation}
    \begin{cases}
        \bx' = \bp \big(\bx(t)\big)\\
        \bx(0) = \bx_0,
    \end{cases}
    \quad t\in\R
\end{equation}
where $\bp$ is a vector of multivariate polynomials, $\bx$ is a vector of variables, and $\bx_0$ is its initial value. Bournez et al. \cite{bournez2017polynomial} have shown that GPACs are provably equivalent to Turing machines in terms of computability and complexity. Chemical reaction networks (CRNs), as a restricted form of GPACs, are widely used in molecular programming applications. The deterministic semantic of CRNs usually assumes well-mixed solutions and mass-action kinetics. CRN dynamics can be modeled by polynomial systems with extra constraints \cite{cHarTot81} (see the preliminary section for more details). Although the constraints could seemingly weaken the computational power of CRNs, they are able to simulate GPACs. Hence, the CRN model is also Turing complete \cite{cFLBP17}. The key idea is to encode every variable $x$ (possibly having negative values) in a GPAC by the difference between a pair of \emph{positive} variables $x^+$ and $x^-$ in the copycat CRN. That is, $x(t)= x^+(t) - x^-(t)$ for all $t\geq 0$.

A straightforward idea to ``compute'' a number $\alpha$ by a GPAC/CRN is to designate a variable $x$ such that $x(t)\to \alpha$ as $t\to \infty$. Further down this line, Huang et al. defined the class of real-time computable real numbers by chemical reaction networks \cite{jHKLLL19}, wherein the real-time computability notion essentially requires $x(t)$ to converge exponentially fast to $\alpha$. Huang et al. then demonstrate that GPACs and CRNs compute the same class of real numbers in real-time \cite{huang2019real}. Notably, famous transcendental numbers like $e$ and $\pi$ are among this class. Later, Huang \cite{huang2020chemical} added Euler's $\gamma$ and Dottie number (the unique real root of the equation $\cos(x)=x$) to the list. Note that in order to get a meaningful measure of time, in Huang et al.'s work, all variables in ODEs associated with GPACs or CRNs are \emph{bounded}. Hence, the notion of time aligns with the time parameter $t$ of the ODEs since no more than a linear speedup is allowed. This notion prevents the so-called Zeno phenomena \cite{jGrCaBu08} caused by inordinate speedup. Another, perhaps equivalent, metric is the \emph{arc length} of $\bx(t)$. A \emph{complexity theory} \cite{bournez2017polynomial} of analog computation can then be built upon this metric.

Also, note that the notion Huang et al. used has a \textit{descriptive} nature: the initial values are all zero, and rate constants are integrals (rationals). Hence, one can only encode a finite amount of information in the initial value and rate constant. In this sense, a system that computes a number $\alpha$ also gives a finite description of $\alpha$. This idea will prove crucial for later discussions.

\subsection{Large Population Protocols} \label{test}
Population protocols (PPs) \cite{jAAER07} can be treated as special CRNs with two inputs, two outputs, and a unit rate constant per reaction. Observe that population protocols \textit{conserve} the population since interactions do not change agent counts. This property imposes a vital constraint on the population: the sum of all agents remains constant. After normalization, we can say that this sum is \emph{one} without loss of generality.

Classical PPs are concerned with computing predicates over their input configuration in a discrete, stochastic setting. Therefore, computing real numbers does not make sense under this setting. In a series of papers \cite{aupy2011number,bournez2009convergence,cBoFrKo12}, Bournez et al. developed a setting they termed a Large-Population Protocol (LPP). LPPs are a new analytical setting rather than a new computational model. They are still population protocols, but their behaviors are analyzed as the population grows to infinity. Bournez et al. aimed for asymptotic results independent of the initial configuration. Our approach departs from theirs in that dependence on initial values is considered. Note that we should approach with caution when defining what is meant by ``initial configuration'' since LPPs need to talk about different population sizes along the way.

The LPP setting provides a playground for computing real numbers by population protocols: a number $\nu \in [0,1]$ is said to be computable by an LPP if the proportion of agents in a set of \emph{marked states} converges to $\nu$ over time as the population grows to infinity. An additional requirement is that the ODE induced by the \emph{balance equation} of the LPP must have finitely many equilibria. This condition enforces (exponential) stability and detaches the dependence on initial values.

Bournez et al. specifically asked\cite{cBoFrKo12} ``... is it possible to compute solutions of trigonometric equations? E.g., can
we design a protocol insuring that, asymptotically, a ratio $\frac{\pi}{4}$ of the nodes are in some prescribed state?'' Their answer is \emph{negative}. They proved that, under their definition, LPPs compute exactly the algebraic numbers in [0,1]. Closely related is a result by Fletcher et al. \cite{fletcher2021robust}, where a class of numbers, called Lyapunov numbers, are found to be the set of algebraic numbers. It is not a coincidence: both works involve ODEs with either finitely many or isolated equilibria. Hence, Tarski's quantifier elimination over real closed fields can be performed. Alternatively, one can find more modern elimination algorithms involving Gr\"obner bases in standard algebraic geometry texts \cite{cox2013ideals,smith2014introduction}.

Why is there a gap between LPPs (cannot compute transcendentals) and GPAC/CRN (can compute $e$ and $\pi$)? The following example helps explore the question.
\begin{motivating-example}
Let $F(t)= \frac{1}{2}e^{e^{-t} -1}$, $E(t)= \frac{1}{2}e^{-t}$, and $G(t)$ be a function such that its derivative ``cancels'' with $E$ and $F$'s derivative. We have the first-order system and the corresponding chemical reaction network:
\begin{equation*}
   \hspace{1.5 cm}
\text{ODE:}\hspace{0.5cm}
       \begin{cases}
        F'= -2FE\\
        E'= -E \\
        G'= 2FE + E
   \end{cases}
   \hspace{1 cm}
\text{CRN/PP:} \hspace{0.5 cm}
\begin{cases}
    \reaction{F + E}{2}{G+E}\\
    E \to G.
   \end{cases}
\end{equation*}
\end{motivating-example}
It is evident that with initial values $F(0)=\frac{1}{2}$, $E(0)=\frac{1}{2}$, and $G(0)=0$, the solution of $F(t)\to\frac{1}{2}e^{-1}$ as $t\to \infty$. Correspondingly, in the LPP setting, if we let $F(0)=\floor{\frac{N}{2}}$, $E(0)=\floor{\frac{N}{2}}$, and $G(0)=0$ for a population of size $N$, we can observe from the experimental results in Figure~\ref{fig:e-computation} that the proportion of $F$ gets very close to $\frac{1}{2}e^{-1}$ when $N$ becomes larger. In fact, the so-called Kurtz's Theorem \cite{jKurt72}, which says ODE solution $F(t)$ equals the proportion random variable $F^{(N)}(t)/N$ almost surely when $N$ goes to infinity, guarantees the limit is $\frac{1}{2}e^{-1}$. Although the CRN in the above example is not technically a PP, we can translate it into a probabilistic PP (Construction~\ref{construction:PLPP}, see also \cite{doty2021ppsim}, Theorem 1). We maintain the protocol in the current form for simplicity.

\begin{figure}[tbh!]
  \centering
  \includegraphics[width=\textwidth]{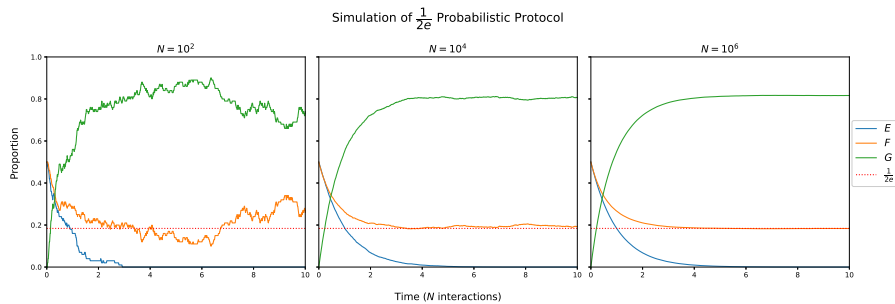}
  \caption{We use the ppsim Python package \cite{doty2021ppsim} to simulate our motivating example (transformed into a probabilistic PP). This figure demonstrates that the transcendental number $\frac{1}{2}e^{-1}$ can be traced by the proportion of a species as the population size grows.}
 \label{fig:e-computation}
\end{figure}

So the above LPP can ``compute'' $\frac{1}{2}e^{-1}$, a \textit{transcendental} number! Is it a counter-example to Bournez et al.'s algebraic result? Not really. Since the system of ODEs in the example has the whole $F$-$G$ plane (with $E=0$) as equilibria, the system has a \textit{continuum} of equilibria. Note that the key reason GPACs/CRNs in \cite{huang2019real} can compute transcendentals is that they are systems with a continuum of equilibria.
\subsection{Computing with a Continuum of Equilibria}
The Motivating Example invites us to \textit{extend} the notion of LPP-computable numbers: We can drop the finitary requirement on equilibria and therefore include systems with a continuum of equilibria. The seminal work \cite{bhat2003nontangency} by Sanjay and Bernstein provides an in-depth study of this type of system.

The move has several consequences. An immediate one is that computation now \emph{depends} on initial values: since the equilibria ``stick together'', we need the initial values to determine which equilibrium the system converges to. Another one is stability: we lose asymptotic stability. As pointed out in \cite{bhat2003nontangency}, ``... Since every neighborhood of a non-isolated equilibrium contains another equilibrium, a non-isolated equilibrium cannot be asymptotically stable. Thus asymptotic stability is not the appropriate notion of stability for systems having a continuum of equilibria.''
However, it is inaccurate to immediately conclude that such systems must be fragile and sensitive to perturbations. Indeed, Sanjay and Bernstein investigated \emph{semistability}, a more appropriate notion of stability for these systems.

We now state our new finding under the extended notion of LPP-computable numbers.
\begin{main-theorem}
    LPPs compute the same set of numbers in [0,1] as GPACs and CRNs.
\end{main-theorem}
Our proof is constructive: We give an explicit algorithm to translate a bounded GPAC or CRN into an LPP. A critical step is the construction of a cubic form (homogeneously degree 3 polynomial) system that conserves the population. Then, we transform the system from the cubic form into the quadratic form. Finally, the system can be rewritten as a probabilistic population protocol.

The key feature of our construction is that we always guarantee the system is either \emph{CRN-implementable} or \emph{PP-implementable}. Bournez et al. constructed a system in \cite{cBoFrKo12} that claimed to compute any algebraic number. While their system is a GPAC, it is not a PP in general. As a by-product, our construction fixes this flaw.

\subsubsection*{Related work:}
Besides Bournez et al.'s work \cite{cBoFrKo12,aupy2011number,bournez2009convergence}, Gupta, Nagda, and Devaraj \cite{gupta2007design} develop a framework for translating a subclass
of differential equation systems into practical protocols for distributed systems (stochastic CRNs in most cases and, in some examples, population protocols). They also restated Kurtz's theorem and considered a large-population setting. The subclasses they considered are \textit{complete} and \textit{completely partitionable}, which means the derivatives of the system sum to zero, and any term (monomial) $T$ occurs as a pair $\{+T, -T\}$ in the system, a very nice property to have. Doty and Severson \cite{doty2021ppsim} provided an algorithm to translate CRNs that consist solely of unimolecular ($X\to Y$) and bimolecular reactions ($X + Y \to Z + W$) to PPs. Those reactions correspond to an ODE system's linear terms (e.g., $x$) and quadratic terms (e.g., $xy$). Their assumption, however, restricts the application of their algorithm from more general GPACs. Firstly, the CRNs they considered preserve populations. \textit{Constant terms} in GPACs , which correspond to reactions like $\emptyset \to X $, \textit{change} the population. Another consideration is the treatment of square terms. Some quadratic systems can not be \emph{directly} written as a CRN with bimolecular reactions (i.e. if positive square terms like $x^2$ occur in $x'$). Later in this paper, we will discuss how the deep entanglement of the two (constant terms and squared terms) makes translating a general GPAC into a PP very difficult.

Our construction does not make any assumptions in the above work on GPACs, except that all variables are bounded.

% test changes after merge

The rest of this paper is organized as follows.
\Cref{sec:preliminaries} discusses preliminaries as well as some notations and conventions necessary to present our main result;
\Cref{sec:e_of_lpp_and_gpacs} includes the main theorem, a four-stage constructive proof, and a few immediate corollaries.
%\Cref{sec:pi_and_e_are_computable} includes proofs that \( e \) and \( \pi \) are real-time computable by chemical reaction networks using the theorems from~\Cref{sec:rte_of_crns_and_gpacs}; and
% might add a section that contains experimental results.
\Cref{sec:conclusion} provides some concluding remarks and discussion on future directions.

%!TEX root = ../main.tex
\section{Preliminaries}\label{sec:preliminaries}
We review some key definitions and basic facts in this section.

\subsection{GPACs, CRNs, Population Protocols, and Their ODE Characterization}
We have discussed that GPACs can be characterized by polynomial initial value problems (PIVPs) in the previous section. To align with the ODE descriptions of GPACs, we will use the ODE characterization of CRNs and PPs throughout this paper. This choice clarifies the discussion of our construction and proof in the coming sections.

%While CRNs are often defined by a finite set of species and a finite set of reactions describing the dynamic of species, we do not adopt this description throughout the paper. Instead, we will treat CRNs as a restricted form of GPACs. The same treatment also applies to population protocols.

We start with a few notations and conventions in this paper.

\begin{notation}[Variable Vector]
    We use the notation $\bx(t)=(x_1(t), x_2(t), \cdots, x_n(t))$ to denote a vector of variables in (usually the solution of) an ODE system, where $n\in \N$ is the vector's dimension and the parameter $t$ is regarded as ``time'' in the system. The first component $x_1$ is usually designated for tracing the number we wish to compute.
\end{notation}

A typical definition of a CRN or PP specifies a pair $(S, R)$, with $S$ being the set of species or states and $R$ being the set of reactions or interactions between states. In general, a reaction for an abstract CRN looks like $X + Y  \goesto{k} X + W + Z$, where $k$ is the \textit{reaction constant}. There is no restriction on the number of reactants (input) or products (output). Reactions in population protocols, however, are strictly limited to those with two inputs and two outputs: $  X + Y  \to W + Z.$

The so-called \textit{deterministic mass-action} semantic of a CRN defines a polynomial differential system, which governs the CRN's dynamics. One can find descriptions of this semantic in many works of literature (\eg, \cite{huang2019real}). We will use these polynomial differential systems directly throughout the paper.

\begin{convention}
  We manipulate our use of the variable vector: when the context is clear, we simply say $(\bx, \bx')$, or $\bx$ itself is a GPAC (CRN, PP), without specifying the associated polynomial differential system $\bx'(t)=(x_1'(t),\cdots, x_n'(t))$. If we don't specify the initial values that means we assume them to be zero. We would assume the correspondence between a variable $x$ and the species $X$ that it describes. In some cases, we will call a variable $x$ a species and $\bxn$ a set of species.
\end{convention}
Now, we introduce notation and terminology for polynomials.
\begin{notation}[Positive Polynomials]
    We denote $\pospoly \subset \Q[\bx]=\Q[x_1, \cdots, x_n]$ the set of multivariable polynomials with positive (rational) coefficients for each monomial (terms). For instance, $x_1x_2 +x_3 +1 \in \pospoly$, while $x_1x_2 -x_3 +1 \not\in \pospoly$.
\end{notation}
\begin{notation}[Variable Occurrence in a Polynomial or Monomial]
Let $p$ be a polynomial (or monomial) and $x$ be a variable, we use the convenient notation $x \in p$ to express $x$ occurs in $p$.
\end{notation}
 Homogeneous polynomials play a fundamental role in our construction. We call them forms for short.
\begin{term}[Forms]
   In mathematics, ``form'' is another name for homogeneous polynomials. A \textit{quadratic form} is a homogeneous polynomial of degree two, for example, $4x^2 + 2xy - 3y^2$; a \textit{cubic form}  is a homogeneous polynomial of degree three, for example, $x^{3}+3x^{2}y+3xy^{2}+y^{3}$.
\end{term}

\begin{term}[Conservative Systems]
    A conservative system has a (non-trivial) quantity that is constant along each solution. In this paper, we call a system that maintains constant total mass, \ie, a system $\bxn$ that satisfies
    \[
        \sum_{i=1}^{n}x_i'=0,
    \]
    a conservative system.
\end{term}

We say a function $f(t)$ is implementable (generable) by a GPAC (CRN, LPP) if there exist a GPAC (CRN, LPP) and a variable (species, state) $x$ in it, such that $f(t) = x(t)$ for all $t$. We choose the word \textit{implementable} over \textit{generable} (as in \cite{jGraCos03}) to reflect the idea that we \textit{program} a GPAC (CRN, LPP) to \emph{implement} a function.
\begin{definition}[GPAC-implementable Functions \cite{jGraCos03}. See also Definition 3 in \cite{cFLBP17}.]
A function $f:\R_{\geq 0}\to R$ is GPAC-implementable if it is the first component \ie $x_1$, of the $\bx(t)=(x_1, x_2, \cdots, x_n)$ solution for the following differential equation:
\begin{equation}
    \begin{cases}
        \bx' = \bp \big(\bx(t)\big)\\
        \bx(0) = \bx_0,
    \end{cases}
    \quad t\in\R
\end{equation}
where $\bp$ is a vector of multivariate polynomials and $\bx$ is a variable vector with $\bx_0$ as its initial value.
\end{definition}
Note that if a variable $x$ has initial value $a$, we can introduce a new variable $y= x-a$ to make $y(0)=0$. Therefore, without loss of generality, we often assume $\bx(0) =\mathbf{0}$ in this paper.

\begin{theorem}[CRN-implementable Functions \cite{cHarTot81}, Theorem 3.1 and 3.2]
A function is CRN-implementable if and only if it is GPAC-implementable with further restriction such that the derivative of each component $x_i$ in its variable vector $\bxn$ is of the form
\begin{equation}\label{eq:CRN-char}
     x_i' = p - q x_i, \quad \text{where }p,\ q \in\pospoly.
\end{equation}
\end{theorem}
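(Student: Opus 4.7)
The plan is to prove both directions of this characterization by analyzing how mass-action kinetics translate individual reactions into polynomial contributions to the rate equations.

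For the forward direction ($\Rightarrow$), I would unpack the deterministic mass-action semantics. A generic reaction $r$ of the form $\sum_j \alpha_j^r X_j \goesto{k_r} \sum_j \beta_j^r X_j$ contributes the term $k_r(\beta_i^r - \alpha_i^r)\prod_j x_j^{\alpha_j^r}$ to $x_i'$. The key observation is that this contribution is negative only when $\alpha_i^r > \beta_i^r$, which forces $\alpha_i^r \geq 1$, and therefore the factor $x_i^{\alpha_i^r}$ divides the monomial. Grouping terms, every negative monomial appearing in $x_i'$ is divisible by $x_i$, while the remaining positive monomials (with non-negative rational coefficients, since both $k_r > 0$ and $\beta_i^r - \alpha_i^r \geq 0$) may be collected into a single positive polynomial. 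Writing $x_i' = p - q x_i$ with $p,q\in\pospoly$ is then immediate.

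For the backward direction ($\Leftarrow$), I would give an explicit, monomial-by-monomial construction using \emph{catalytic} reactions. For each monomial $c\prod_j x_j^{\alpha_j}$ in $p$, introduce the reaction
\[
\sum_j \alpha_j X_j \goesto{c} \Bigl(\sum_j \alpha_j X_j\Bigr) + X_i,
\]
whose mass-action rate is exactly $c\prod_j x_j^{\alpha_j}$ and whose net effect on every species other than $X_i$ is zero, producing a single $X_i$. For each monomial $c\, x_i \prod_j x_j^{\alpha_j}$ in $qx_i$, introduce
\[
X_i + \sum_j \alpha_j X_j \goesto{c} \sum_j \alpha_j X_j,
\]
which contributes exactly $-c\, x_i \prod_j x_j^{\alpha_j}$ to $x_i'$ and nothing to any other $x_k'$. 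Summing these contributions over all monomials of $p$ and $qx_i$ and over all $i$ recovers the prescribed ODE system. Hence the CRN so constructed implements the same function $f = x_1$ as the original GPAC.

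The main obstacle, and the only place where one must be careful, is the backward construction: we must ensure that the catalytic reactions designed to realize $x_i'$ do not corrupt the other derivatives $x_k'$ for $k \neq i$. The ``spectator'' trick above (keeping all non-$X_i$ species simultaneously on the left and right with identical stoichiometry) is what guarantees this, and it is the reason the positive polynomial $p$ must lie in $\pospoly$ rather than having arbitrary real coefficients — negative rate constants have no physical reaction interpretation. A minor subtlety worth flagging is the handling of the special case $x_i \in p$: a positive contribution proportional to $x_i$ itself is realized by a reaction of the form $X_i \goesto{c} 2X_i$, which still fits the template above. No issues arise from the initial-value reduction mentioned earlier, since the CRN construction is agnostic to the choice $\bx(0)=\mathbf{0}$.
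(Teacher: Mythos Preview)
The paper does not actually prove this theorem; it is stated with a citation to H\'ars and T\'oth \cite{cHarTot81} (Theorems 3.1 and 3.2), followed only by a one-sentence intuitive gloss: ``The restriction is rooted in reaction modeling: a negative term in $x'$ means to \emph{destroy} some molecular $x$ at some rate. However, a reaction cannot destroy a non-reactant, so $x$ must appear as a reactant in the reaction.'' There is no in-paper proof to compare your proposal against.

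That said, your proposal is a correct and essentially complete reconstruction of the classical H\'ars--T\'oth argument. The forward direction is exactly the observation the paper alludes to (a negative net stoichiometric change $\beta_i^r - \alpha_i^r < 0$ forces $\alpha_i^r \geq 1$, hence $x_i$ divides the offending monomial), and your backward direction via catalytic ``spectator'' reactions is the standard construction. Your handling of the cross-contamination issue---ensuring that the reaction built for $x_i'$ leaves all other $x_k'$ untouched by keeping non-$X_i$ species with identical stoichiometry on both sides---is the right idea and the only place one could go wrong. So your write-up simply fills in what the paper outsources to the reference, and does so in the expected way.
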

The restriction is rooted in reaction modeling: a negative term in $x'$ means to \textit{destroy} some molecular $x$ at some rate. However, a reaction cannot destroy a non-reactant, so $x$ must appear as a reactant in the reaction. Such restriction extends to Population Protocols together with some extra constraints.
\begin{corollary}[PP-implementable Functions]
\label{thm:PP-cond}
A function is PP-implementable, if and only if
\begin{enumerate}
    \item[i.] it is CRN-implementable, \ie, there is a CRN $(\bx,\bx')$ computing it.
    \item[ii.] for all $x_i$ in $\bxn$, $x_i'$ does not possess a positive $x_i^2$ term.
    \item[iii.] $\bx'$ is a quadratic form system, and
    \item[iv.] $\bx'$ is conservative.
\end{enumerate}
\begin{proof}
 For the ``only if'' direction, all conditions other than condition (ii) are clear. To see that $x_i'$ can not have an $x_i^2$, we consider two $X_i$'s on the left-hand side of a reaction. The reaction can not increase the amount of $X_i$, since that would require at least $k+1$ $X_i$'s as products on the right-hand side. As a result, a positive $x_i^2$ term can not occur in $x_i'$.
 See Theorem \ref{thm:PLPP} for the ``if'' direction of the proof.

\end{proof}
\end{corollary}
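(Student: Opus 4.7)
The plan is to establish the ``if'' direction (sufficiency of (i)--(iv)), which the excerpt defers to Theorem~\ref{thm:PLPP}. Given an ODE system $\mathbf{x}' = \mathbf{p}(\mathbf{x})$ satisfying the four conditions, I would first construct a CRN-style reaction multiset whose mass-action dynamics recover $\mathbf{p}$, verify that every reaction has the PP shape ($2$-in, $2$-out), and then invoke Construction~\ref{construction:PLPP} to package it as a probabilistic LPP.

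The first step is a monomial-by-monomial decomposition. By (iii) each component can be written $x_i' = \sum_{j \le k} c_{i;jk}\, x_j x_k$, and by (iv) the column sums $\sum_i c_{i;jk}$ vanish for every pair $(j,k)$. Condition (i) forces a negative coefficient $c_{i;jk} < 0$ to satisfy $i \in \{j,k\}$, since the negative part of $x_i'$ must divide by $x_i$; condition (ii) further pins down $c_{j;jj} \le 0$. Thus for each pair $(j,k)$, only the species $X_j$ and $X_k$ may have negative net rate on $x_j x_k$, and the vanishing column sum is exactly the PP conservation identity that two molecules are consumed and two are produced per interaction.

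Next, for each pair $(j,k)$ I would synthesize a multiset of PP reactions $X_j + X_k \to X_a + X_b$ whose accumulated rates reproduce $\{c_{i;jk}\}_i$. The construction is essentially a matching problem: treat the positive entries $\{(i, c_{i;jk}) : c_{i;jk} > 0\}$ as product ``demand,'' and the two reactant slots on $\{X_j, X_k\}$ (minus any regenerated $X_j$ or $X_k$) as ``supply.'' Pair the two sides greedily, emitting a reaction of rate equal to the weight consumed in each pairing; conservation ensures the totals balance, so the pairing always completes. Collecting over all pairs $(j,k)$ yields a CRN of 2-in/2-out reactions with rational rate constants whose mass-action ODE equals $\mathbf{x}' = \mathbf{p}(\mathbf{x})$.

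Finally, to convert this rated CRN into a probabilistic population protocol I would uniformly rescale time so that for every reactant pair $(j,k)$ the sum of outgoing rates is at most one, and absorb the slack $1 - \sum r$ into the inert interaction $X_j + X_k \to X_j + X_k$. Time rescaling does not affect $t \to \infty$ limits of $x_1$, so the computed number is preserved. The main obstacle I expect is the bookkeeping in the decomposition step---showing that the greedy matching covers all edge cases uniformly, in particular monomials $x_j^2$ where both consumption slots fall on the same species (so that $c_{j;jj}$ must be expressible as a non-negative combination of $-1$ and $-2$ coming from reactions of the form $2X_j \to X_j + X_\ell$ versus $2X_j \to X_a + X_b$), and mixed-sign cases such as $c_{j;jk} > 0$ with $j \ne k$ produced by regenerative reactions like $X_j + X_k \to 2X_j$. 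Once this combinatorial lemma is pinned down, Construction~\ref{construction:PLPP} packages the resulting transitions into an LPP and finishes the proof.
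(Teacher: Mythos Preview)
Your approach is correct but takes a genuinely different route from the paper's. You propose a direct combinatorial decomposition: for each monomial $x_jx_k$, read off the coefficient vector $(c_{i;jk})_i$, use conditions (i)--(iv) to locate the negative entries on $\{j,k\}$, and then greedily match product ``demand'' against reactant ``supply'' to synthesize a multiset of $2$-in/$2$-out reactions. This works (the feasibility of the matching follows once you allow padding with idle reactions $X_j+X_k\to X_j+X_k$ to raise the total rate $R$ above $\max(-c_{j;jk},-c_{k;jk},0)$), but it does require the case analysis you flag for $x_j^2$ and for regenerative pairs.

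The paper's Construction~\ref{construction:PLPP} sidesteps the matching entirely with an algebraic trick: it writes $x_i' = f_i(\bx) - 2x_i$ where $f_i(\bx)=\varepsilon(p_i-q_ix_i)+2x_i\sum_j x_j$, uses the one-trick to make $2x_i$ into a quadratic form, and then chooses $\varepsilon$ small enough that every coefficient of $f_i$ is positive. At that point there is nothing to match: the rule set is read off directly, with each term $\alpha\,x_jx_k$ in $f_i$ becoming a transition $(x_j,x_k)\to(x_i,x_i)$ of weight $\alpha/4$ (or $\alpha/2$ when $j=k$), and the $-2x_i$ part is exactly the universal consumption $\sum_j x_ix_j + \sum_j x_jx_i$ that every complete PLPP rule set induces. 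Condition~(ii) is what guarantees the weight attached to $(x_j,x_j)\to(x_j,x_j)$ never exceeds $1$. So Construction~\ref{construction:PLPP} is not merely a packaging step applied after your decomposition---it \emph{is} the alternative decomposition, and Theorem~\ref{thm:PLPP} is the verification that it yields a valid PLPP precisely under (i)--(iv).

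In short: your route is more hands-on and yields reactions with heterogeneous products $X_a+X_b$; the paper's route trades a small time dilation (the $\varepsilon$) for uniformity, producing only rules of the shape $(x_j,x_k)\to(x_i,x_i)$ and eliminating all the combinatorial bookkeeping you anticipate.
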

In this paper, we will also call a CRN containing only unimolecular reactions of the form $X \to Y$ a \textit{unimolecular} population protocol (UPP). Similarly, a CRN containing only termolecular reactions of the form $X + Y + Z \to U + V + W$ is called a \textit{termolecular} population protocol (TPP). In general, a $k$-PP is a CRN that contains only reactions of the form $X_1 + \cdots + X_k \to  Y_1 + \cdots + Y_k$.
Generally, we have the following characterization.
\begin{corollary}[$k$-PP-implementable Functions]
  \label{cor:kPP}
A function is $k$-PP-implementable, if and only if
\begin{enumerate}
    \item[i.] it is CRN-implementable, \ie, there is a CRN $(\bx,\bx')$ computing it.
    \item[ii.] for all $x_i$ in $\bx$, $x_i'$ does not possess a positive $x_i^k$ term.
    \item[iii.] $\bx'$ is homogeneously degree $k$, and
    \item[iv.]  $\bx'$ is conservative.
\end{enumerate}

\end{corollary}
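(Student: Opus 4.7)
The plan is to mirror the proof of Corollary~\ref{thm:PP-cond}, replacing ``bimolecular'' by ``$k$-molecular'' throughout. Both directions adapt in the natural way: the ``only if'' direction is a direct $k$-ary extension of the $k=2$ reasoning given there, and the ``if'' direction is a constructive decomposition analogous to the one Theorem~\ref{thm:PLPP} provides for $k=2$.

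For the ``only if'' direction, assume the function is $k$-PP-implementable by some $k$-PP $(\bx,\bx')$. Condition (i) is immediate since every $k$-PP is a CRN. Condition (iii) follows because mass-action assigns to each reaction $X_{j_1}+\cdots+X_{j_k}\to Y_{l_1}+\cdots+Y_{l_k}$ the rate monomial $x_{j_1}\cdots x_{j_k}$ of degree exactly $k$, contributing $(\nu_i-\mu_i)\,x_{j_1}\cdots x_{j_k}$ to $x_i'$. Condition (iv) holds because each reaction has $k$ inputs and $k$ outputs, so $\sum_i(\nu_i-\mu_i)=0$ per reaction. For condition (ii), a positive $x_i^k$ term in $x_i'$ would require a reaction $kX_i\to\alpha X_i+\cdots$ with $\alpha>k$; but the total product multiplicity of a $k$-PP reaction is exactly $k$, forcing $\alpha\le k$, so no such reaction exists.

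For the ``if'' direction I would construct a $k$-PP realizing the ODE monomial by monomial. Write $x_i'=\sum_m c_i(m)\,m$, where $m$ ranges over distinct degree-$k$ monomials, $c_i(m)\in\Q$, and $\mu_i(m)$ denotes the multiplicity of $x_i$ in $m$. From the hypotheses, condition (i) forces $c_i(m)<0$ only when $\mu_i(m)>0$, condition (iv) gives $\sum_i c_i(m)=0$, and condition (ii) forces $c_i(m)\le 0$ when $m=x_i^k$. For each $m$ I would use only elementary ``transport'' reactions that take the $k$ reactants dictated by $m$ and replace one copy of $X_i$ (where $\mu_i(m)>0$) by one copy of $X_j$, producing a net change whose only nonzero coordinates are $-1$ at position $i$ and $+1$ at position $j$. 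Picking rates $r_{ij}\ge 0$ to satisfy
\begin{equation*}
c_i(m)=\sum_{j}r_{ji}-\sum_{j}r_{ij},\qquad i=1,\ldots,n,
\end{equation*}
is a balanced transportation problem with supplies at nodes where $\mu_i(m)>0$ and demands at nodes where $c_j(m)>0$; balancedness is exactly condition (iv).

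The single place where condition (ii) is essential is the monomial $m=x_i^k$: here the only supply node is $i$ itself, so transportation feasibility reduces to $c_i(m)\le 0$, which is (ii). For every other $m$, conditions (i) and (iv) alone yield feasibility. The main obstacle I expect is largely bookkeeping: verifying that the union of all these transport reactions, taken together with mass-action, reproduces the target ODE and inherits the CRN-implementable form $x_i'=p-qx_i$ from Equation~\eqref{eq:CRN-char}. Once the transportation reduction is in place, the remaining verification is routine and follows exactly the pattern used for $k=2$ in Theorem~\ref{thm:PLPP}.
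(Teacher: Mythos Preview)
Your proposal is correct. The ``only if'' direction matches the paper's argument for Corollary~\ref{thm:PP-cond} verbatim, just with $2$ replaced by $k$. For the ``if'' direction, however, you take a genuinely different route. The paper leaves Corollary~\ref{cor:kPP} without its own proof and, for $k=2$, defers to Theorem~\ref{thm:PLPP} and Construction~\ref{construction:PLPP}: one scales by a small $\varepsilon$, adds and subtracts $2x_i$, rewrites $2x_i$ via the one-trick as $2x_i\sum_j x_j$, and then reads off PLPP rules of the shape $(x_j,x_k)\to(x_i,x_i)$. Your construction instead works monomial-by-monomial with ``transport'' reactions that swap a single reactant $X_i$ for $X_j$, reducing the realization to a balanced transportation problem whose feasibility is exactly conditions~(i), (ii), and~(iv). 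This is more elementary---no $\varepsilon$-trick, no auxiliary $\pm k x_i$ terms---and it generalizes to arbitrary $k$ without further thought. What the paper's approach buys is a \emph{probabilistic} protocol (weights summing to one per reactant tuple), which is what Stage~4 of the main construction ultimately needs; your $k$-PP has arbitrary rational rate constants, which is all Corollary~\ref{cor:kPP} asks for but would still need a PLPP conversion step downstream.
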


From above, we see that turning a GPAC into a PP means having to dance with more and more chains. We must carefully appease all restrictions, which makes the translation process difficult.

We first revisit some useful concepts regarding PPs.
\subsection{Probabilistic Large-Population Protocols}

\begin{definition}[Balance Equations \cite{cBoFrKo12}]
    Let $(Q, R)$ be a large-population protocol, where $Q$ is the state (species, variables) set and $R$ is a set of reaction rules, each with two reactants and two products. The \emph{balance} equation of a \textit{deterministic} LPP is a function $b:\mathbb{R}^{|Q|}\to\mathbb{R}^{|Q|}$ such that
\begin{equation}\label{eq:balance_eq_det}
    b(x)=\sum_{(q_1, q_2)\in Q^2}\bigg( x_{q_1} x_{q_2}\big(-e_{q_1}-e_{q_2}
    + \sum_{(q_3, q_4)\in Q^2} \delta_{q_1, q_2,q_3, q_4}(e_{q_3}+ e_{q_4})\big)\bigg)
\end{equation}
where $\delta_{q_1, q_2,q_3, q_4}=1$ if $(q_1, q_2) \to (q_3,q_4)$, and 0 otherwise; $(e_q)_{q\in Q}$ is the canonical base of $\mathbb{R}^{|Q|}$ and $x_{q_i}$ is the proportion of the population in state $q_i$.
\end{definition}
The balance equation is a description of the system's dynamics. Intuitively, it says whenever $q_1$ and $q_2$ bump into each other, if there is a reaction $(q_1, q_2) \to (q_3,q_4)$ in the PP, then the pair $(q_1,q_2)$ turns into $(q_3, q_4)$; otherwise, we interpret $(q_1, q_2)$ as a \emph{null reaction} and the agents remain unchanged. We adopt an \emph{asymmetric} interpretation of reactions in PPs throughout the paper. That is, the ordered pairs $(q_i , q_j ) $ and $(q_j , q_i )$, when used as reactants, do not necessarily result in the same products. The assumption is not essential: the choice is to align with the existing literature (\eg, \cite{cBoFrKo12}) for convenience.

The above balance equation is for deterministic PPs, where $\delta_{q_1, q_2,q_3, q_4}$ is either 0 or 1. It is more convenient to use probabilistic transition rules in many scenarios. We introduce probabilistic LPPs (PLPPs).

\begin{definition}[Probabilistic LPP (PLPP) \cite{cBoFrKo12}]
A PLPP is an LPP with transition rules in the form
\[
    q_i \ q_j \to \alpha_{i,j,k,l}\ q_k \ q_l
\]
and for every $(q_i,q_j)\in Q^2$, we have
\begin{itemize}
    \item for every $(q_k, q_l)\in Q^2$, $\alpha_{i,j,k,l}\in\Q$ and $\alpha_{i,j,k,l}>0$, and
    \item $\displaystyle\sum_{(q_k,q_l)}\alpha_{i,j,k,l} =1.$
\end{itemize}
\end{definition}
 The balance equation remains mostly the same as Equation~(\ref{eq:balance_eq_det}). To simplify the notations, we often take $Q =[n]=\{1,2,\cdots,n\}$ and will still use $q_i$ or use terms like ``state $i$'' when referring to a state.
\begin{equation}\label{eq:balance_eq_prob_simple}
    b(x)=\sum_{(i, j)\in [n]^2}\bigg( x_{i} x_{j}\big(-e_{i}-e_{j}
    + \sum_{(k, l)\in [n]^2} \alpha_{i, j, k, l}(e_{k}+ e_{l})\big)\bigg)
\end{equation}
The ODE associated with a PLPP can be written as
\begin{equation}\label{eq:balance-equation}
    \frac{\diff \bx}{\diff t}= b(\bx),
\end{equation}
where $\bx=(x_1(t),\cdots, x_n(t)) \in \R^{n}$ is a variable vector and its $i$-th component, $x_{i}$, tracks the proportion of the total population in state $i$ over time $t$. We will often call $\bxn$ or $\bx$ an LPP (PLPP).

A simple observation results directly from the definition of LPP or PLPP.
\begin{observation}[``The one trick'']\label{obs:one-trick}
Let $\bxn$ be an LPP or PLPP, then $\displaystyle \sum_{i\in [n]} x_i = 1.$
\end{observation}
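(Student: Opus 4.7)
The plan is to reduce the observation to two elementary facts: (i) the balance equation defining the PLPP dynamics is conservative in the sense of the Terminology just introduced, i.e.\ the sum of its components is identically zero; and (ii) the initial configuration satisfies $\sum_i x_i(0)=1$ because $x_i$ is by definition the proportion of the total population sitting in state~$i$. Combining (i) with the ODE~\eqref{eq:balance-equation} gives $\frac{d}{dt}\sum_i x_i(t) \equiv 0$, so the sum is constant along every trajectory; combining this with (ii) pins that constant to $1$.

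The only real content is step (i), and I would verify it directly from Equation~\eqref{eq:balance_eq_prob_simple} by taking the inner product with the all-ones vector $\mathbf{1} = \sum_{m\in[n]} e_m$. Using $\langle e_q,\mathbf{1}\rangle = 1$ for every state $q$, the bracket inside the ordered-pair sum collapses to
\[
    \langle -e_i - e_j + \sum_{(k,l)\in[n]^2}\alpha_{i,j,k,l}(e_k+e_l),\ \mathbf{1}\rangle \;=\; -2 + 2\sum_{(k,l)}\alpha_{i,j,k,l}.
\]
The PLPP normalization condition $\sum_{(k,l)}\alpha_{i,j,k,l}=1$ (and, in the deterministic LPP case, the analogous fact that exactly one $\delta_{i,j,k,l}$ equals $1$) makes this expression vanish term-by-term, so $\sum_{i} b_i(\bx)=0$ for every $\bx$, establishing (i). For (ii), I would simply appeal to the interpretation of $x_i$ as a normalized population fraction over the $n$ disjoint states.

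I do not expect any real obstacle: the single bookkeeping point to be careful about is the \emph{asymmetric} convention on reactions flagged just before the definition of PLPPs, which means the outer sum ranges over ordered pairs $(i,j)\in[n]^2$ rather than unordered collisions; this forces the cancellation to be checked at the level of ordered pairs, but does not change the argument. If desired, one can also sanity-check the deterministic case separately by substituting $\alpha_{i,j,k,l}$ with $\delta_{i,j,k,l}\in\{0,1\}$ in Equation~\eqref{eq:balance_eq_det} and observing that the same cancellation goes through.
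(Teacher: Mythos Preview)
Your argument is correct. The paper does not actually supply a proof of this observation: it simply remarks that it ``results directly from the definition of LPP or PLPP'' and is ``obvious,'' then moves on to discuss how it is used. Your two-step decomposition (conservativity of $b$ via the normalization $\sum_{(k,l)}\alpha_{i,j,k,l}=1$, together with the initial proportions summing to $1$) is exactly the natural unpacking of that one-line claim, so there is nothing to compare.
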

Although obvious, the above is the single \emph{most important} observation in this paper. Many constructions and proofs rely on it. This observation is frequently used to re-write the constant term in an ODE. We will rewrite a constant $c$ in either of the following ways, depending on our needs:
\begin{itemize}
    \item $c= c\cdot 1= c(x_1+\cdots+x_n)$, or
    \item $c=c\cdot 1 \cdot 1 =c (x_1+\cdots+x_n)(x_1+\cdots+x_n)$.
\end{itemize}
\begin{observation}\label{obs:component}
  The $r$-th component of Equation~(\ref{eq:balance-equation}) has the form
\begin{equation}\label{eq:component_balance}
  \frac{\diff x_{r}}{\diff t}= f(\bx)-2x_{r},
\end{equation}
where $f(\bx)$ is a quadratic form in $\pospoly$.
\end{observation}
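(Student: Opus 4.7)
The proof is essentially a direct computation on the balance equation~(\ref{eq:balance_eq_prob_simple}), split into its ``outflow'' and ``inflow'' parts, plus one invocation of Observation~\ref{obs:one-trick}. I would begin by projecting the vector identity onto the $r$-th coordinate, writing
\[
    \frac{\diff x_r}{\diff t} = \sum_{(i,j)\in[n]^2} x_i x_j \Big(-\mathbf{1}[r=i]-\mathbf{1}[r=j] + \sum_{(k,l)\in[n]^2}\alpha_{i,j,k,l}\big(\mathbf{1}[r=k]+\mathbf{1}[r=l]\big)\Big),
\]
and then separating the negative contribution (the agent in state $r$ is consumed) from the positive contribution (the reaction deposits an agent into state $r$).

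For the negative contribution, I would collect the two indicator terms separately:
\[
    -\sum_{(i,j)} x_i x_j \mathbf{1}[r=i] \;-\; \sum_{(i,j)} x_i x_j \mathbf{1}[r=j] \;=\; -x_r \sum_{j} x_j \;-\; x_r \sum_{i} x_i.
\]
Here I would apply ``the one trick'' (Observation~\ref{obs:one-trick}): since $\bx$ is an LPP/PLPP, $\sum_i x_i = 1$, so each of these sums collapses and the whole negative block equals exactly $-2x_r$. This gives precisely the linear outflow term predicted by the statement.

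For the positive contribution, I would define
\[
    f(\bx) \;:=\; \sum_{(i,j)\in[n]^2}\sum_{(k,l)\in[n]^2} \alpha_{i,j,k,l}\,\big(\mathbf{1}[r=k]+\mathbf{1}[r=l]\big)\, x_i x_j,
\]
and simply read off its properties. Every term is a nonnegative scalar times $x_i x_j$, so $f(\bx)$ is a sum of degree-$2$ monomials and is therefore a quadratic form; the definition of a PLPP requires $\alpha_{i,j,k,l}>0$, and the indicator factor is $0$ or a positive integer, so every nonzero coefficient is strictly positive, placing $f(\bx)\in\pospoly$. Combining the two halves yields the claimed form $\diff x_r/\diff t = f(\bx) - 2x_r$.

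There is no real obstacle beyond careful bookkeeping of indices. The only nontrivial ingredient is the use of $\sum_i x_i = 1$ to turn a genuinely quadratic expression into the linear term $-2x_r$; this is precisely why the observation is phrased for LPPs/PLPPs rather than for arbitrary quadratic ODE systems, and it is worth flagging explicitly in the write-up so that the role of the conservation identity is transparent.
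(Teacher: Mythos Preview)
Your proposal is correct and follows essentially the same approach as the paper: project onto the $r$-th coordinate, split into outflow and inflow, and use Observation~\ref{obs:one-trick} to collapse the outflow to $-2x_r$. You actually add a small step the paper leaves implicit, namely the explicit verification that every coefficient of $f(\bx)$ is nonnegative so that $f\in\pospoly$.
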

An intuitive way to think about equation \ref{eq:component_balance}:
\begin{itemize}
    \item The term $-2x_r$ represents all reactions that \emph{consume} $x_r$. A complete set of rules for a PP must include all combinations of pairs $(x_r, x_k)$, or $(x_k, x_r)$, where $k$ iterates through all $[n]$. These combinations sum to $-2 x_r$.
    \item The $f(x)$ represents all reactions that \emph{produce} $x_r$. This observation is critical in the constructive proof of our main theorem.
\end{itemize}

\begin{proof}
   Take the $x_r$ component in Equation~(\ref{eq:balance-equation}). We have
   \begin{align*}
    \frac{\diff x_{r}}{\diff t}
    &=\sum_{ j\in [n]}\bigg( -x_{r} x_{j}\bigg)+ \sum_{j\in[n]}\bigg( -x_{r} x_{j}\bigg)\\
    &\quad + \textstyle\proj_r\Bigg(\sum_{i,j\in [n]}\bigg( x_{i} x_{j}\sum_{k, l\in [n]} \alpha_{i, j,k, l}(e_{k}+ e_{l})\bigg)\Bigg),  \\
    &= -2x_r  + f(\bx), \quad\text{by Observation \ref{obs:one-trick}.}
   \end{align*}
   where  $f(\bx)= \textstyle\proj_r\Bigg(\sum_{i,j\in [n]}\bigg( x_{i} x_{j}\sum_{k, l\in [n]} \alpha_{i, j,k, l}(e_{k}+ e_{l})\bigg)\Bigg)$ and $\proj_{r}(\cdot)$ is the $r$-th component a vector. This completes the proof.
\end{proof}

We now give a formal definition of LPP-computable numbers.

\subsection{Large-Population Protocol Computable Numbers: Revisited}\label{ssec:LPP-revisit}

We extend the definition from \cite{cBoFrKo12}. Note that the definition also applies to large-population \emph{unimolecular} protocols (LPUPs).
\begin{definition}
    A real number $\nu$ is said to be computable by an LPP (PLPP, LPUP) if there exists an LPP (PLPP, LPUP) such that $\bx(t)=(x_1(t),x_2(t),\cdots,x_n(t))\in [0,1]^n$ and
    \[
        \lim_{t\to \infty}\sum_{i\in M}x_i(t)=\nu,
    \]where $M\subseteq\{1,\cdots, n\}$ represents the subset of states marked in $\bx$. Moreover, all the states $x_i$ must be initialized to some positive rational $r_i\in \mathbb{Q}\cap[0,1]$, in the sense that $\lim_{N\to \infty} x_i^{(N)}(0)=r_i$, when $x_i^{(N)}(0)$ is the initial fraction of state $i$ at the stage when the population is $N$.
\end{definition}

A major change is our removal of the finitary requirement on equilibria. A note originally in \cite{cBoFrKo12} discusses the rationale for the requirement of finitely many equilibria. This assumption is needed to
\textit{``avoid pathological cases, in particular the  case  of  idle  systems $q\  q' \rightarrow q\  q'$ for  all $q$ and $q'$.   Indeed, in  idle  systems,  all initial states are equilibria, and such a system would compute any real of [0,1], depending on the initial configuration.''}

 This pathological behavior could occur, if one is allowed to initialize the system with any real numbers. However, we can prevent this case by restricting the initial configuration to rational numbers. Also note that by forcing the initial counts of a species $X$ to be a fraction of the total population $N$, we exclude those systems (\eg,\cite{dudek2018universal,alistarh2017robust}) that crucially depend on small counts in a very large population. These systems are not modeled correctly by ODEs. For instance, one can not initialize $X$ to have 100 molecules. Any state initialized to a constant amount will be a fraction that approaches zero, as $N$ grows to infinity.  In our definition, such $X$ will be initialized to have \textit{zero} molecules for a fixed $N$.

\begin{remark*}
    It is worth noting the above setup satisfies the assumptions of Kurtz's Theorem \cite{jKurt72}. Therefore, the long-term behavior of an LPP is governed by an ODE, which is induced by its \textit{balance equation}. As a result, in this paper, we often blur the  boundary between the continuous semantics (by ODE) and stochastic semantics when discussing limiting behavior. See Section \ref{sec:conclusion} for initial discussion on further investigation on the two semantics.
\end{remark*}

For the rest of the paper, whenever we mention LPP computable number, we mean the extended notion defined above, unless specified otherwise.
\subsection{Basic Results}
\subsubsection{Large-Population UPPs Compute Only Rationals}
As a toy case study, we investigate the class of numbers computable by large-population unimolecular protocols (LPUP). We find that unimolecular protocols have limited power.
%!TEX root = ../main.tex
\begin{lemma}
A number is LPUP-computable if and only it is rational.
\end{lemma}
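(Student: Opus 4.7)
The proof splits into an easy ``if'' direction and a main ``only if'' direction. For the ``if'' direction, given $\nu = p/q \in \intcc{0}{1} \cap \Q$, I will exhibit a trivial LPUP: take two states $X_1, X_2$ with empty reaction set, mark $\{X_1\}$, and initialize $x_1^{(N)}(0) = \floor{pN/q}/N$ and $x_2^{(N)}(0) = 1 - x_1^{(N)}(0)$, so that $\lim_{N\to\infty} x_1^{(N)}(0) = p/q$. Since no reaction fires, $x_1(t) \equiv p/q$ for all $t$.

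For the ``only if'' direction, observe that since every reaction has the form $X_i \to X_j$ with rational rate, the balance ODE of an LPUP is \emph{linear}: $\bx'(t) = A \bx(t)$, where $A \in \Q^{n \times n}$ has non-negative off-diagonal entries and zero column sums (by conservativeness). Thus $A$ is, up to transposition, the generator of a finite-state continuous-time Markov chain. My plan is to show that $\lim_{t\to\infty} e^{At}$ exists and is a rational matrix, so that any rational initial configuration evolves to a rational limit and the marked sum is rational as well.

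The key technical step is the spectral fact that every eigenvalue of $A$ with zero real part equals $0$, and $0$ is semisimple. To establish this, I pick $c > \max_i |A_{ii}|$ so that $B := (A + cI)/c$ is a column-stochastic matrix with strictly positive diagonal. Such a $B$ is aperiodic, so Perron--Frobenius gives that $1$ is its only unit-modulus eigenvalue and that $1$ is semisimple. The spectral shift $\lambda \mapsto (\lambda + c)/c$ transfers this property back to $A$. It follows that $\R^n = \ker(A) \oplus \mathrm{image}(A)$, and both summands admit bases in $\Q^n$, so the spectral projection $P$ onto $\ker(A)$ along $\mathrm{image}(A)$ is a rational matrix; standard linear ODE theory then gives $\lim_{t\to\infty} e^{At} = P$. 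Hence $\lim_{t\to\infty}\bx(t) = P\bx(0) \in \Q^n$, and the marked sum $\sum_{i \in M} x_i(t) \to \sum_{i \in M} (P \bx(0))_i \in \Q$.

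The main obstacle is exactly ruling out non-zero imaginary eigenvalues of $A$. Without that step, $e^{At}$ could in principle oscillate indefinitely, the marked sum could fail to converge, or some contrived cancellation could conceivably push its limit outside of $\Q$. The Perron--Frobenius argument on the shifted stochastic matrix dispatches this cleanly, and aside from basic linear algebra over $\Q$ nothing heavier is required.
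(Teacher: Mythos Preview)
Your proof is correct, and it follows a genuinely different path from the paper's.

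For the ``if'' direction, the paper builds an explicit cyclic protocol on $q$ states $X_1\to X_2\to\cdots\to X_q\to X_1$, so that every state equilibrates to $1/q$ and one marks $p$ of them. Your idle-system construction is simpler and, as the paper itself notes parenthetically, is legitimate under the extended definition that allows a continuum of equilibria and rational initial values.

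For the ``only if'' direction, the paper argues combinatorially and only for the deterministic case: since each state has out-degree one, the reaction graph is a functional graph (trees feeding into cycles), and the mass entering each cycle spreads uniformly over the cycle's nodes, yielding a rational limit. Your argument instead treats the balance ODE $\bx'=A\bx$ directly, shifts $A$ to a column-stochastic matrix with positive diagonal, and invokes Perron--Frobenius to conclude that $0$ is the only eigenvalue of $A$ on the imaginary axis and is semisimple; hence $e^{At}\to P$ with $P$ the rational projection onto $\ker A$ along $\mathrm{image}\,A$, and $P\bx(0)\in\Q^n$.

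What each approach buys: the paper's argument is elementary and gives a transparent dynamical picture, but it is specific to the deterministic case (the paper explicitly defers the probabilistic extension). Your spectral argument is slightly heavier in prerequisites but is uniform: it handles deterministic and probabilistic unimolecular protocols at once with no extra work, it actually \emph{proves} that $\bx(t)$ converges (rather than assuming the marked sum has a limit), and the rationality of the limit drops out cleanly from $A\in\Q^{n\times n}$. Both are valid; yours is the cleaner general statement.
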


\begin{proof}

  Suppose $\nu=\frac{p}{q}\in[0,1]$ is a rational number. Lemma 3 of Bournez et al. \cite{cBoFrKo12} constructs an LPP with $q$ states such that each state converges $\frac{1}{q}$ for an exponentially stable equilibrium. We modify their construction to obtain a unimolecular LPP. Let $X=\{X_1,\dots,X_q\}$ be the state set and the transitions be of the form $X_i \rightarrow X_{(i \mod q) + 1}$  (see Figure~\ref{fig:visual-unimolecular} for a visualization). Then each state will converge to $\frac{1}{q}$. To compute $\nu$, mark states $X_1,\dots, X_p$. Thus, there exists a deterministic unimolecular LPP that computes $\nu$. (Alternatively, one could initialize an idle system with the desired rationals, since now the extended notion of LPP-computable number allows a continuum of equilibria and rational intial values.)

  Suppose a deterministic unimolecular protocol $\mathcal{P}$ computes some number $\nu\in[0,1]$. Then there is a graph $G$ corresponding to the reactions of $\mathcal{P}$ such that nodes are states and directed edges are transitions. The deterministic nature ensures that self-loops exist only if a node is isolated from all other nodes. Since each node has the outdegree $1$, $G$ is a functional graph. Functional graphs are a set composed of rooted trees attached to a cycle (\cite{flajolet2009analytic}, page 129; or see Figure~\ref{fig:functional-graph} for an example). Traversing the graph from any starting point will eventually lead to being stuck in a cycle. The proportion of mass that flows into a cycle will become uniformly distributed among the cycle's finite number of nodes, computing a rational number. Thus, deterministic unimolecular protocols compute rational numbers.
\end{proof}
Note that the above lemma can be extended to probabilistic unimolecular protocols, but we omit this work for the sake of simplicity.

\begin{figure}[!htb]
    \centering
    \begin{minipage}{.5\textwidth}
      \centering
      \includegraphics[scale=0.7]{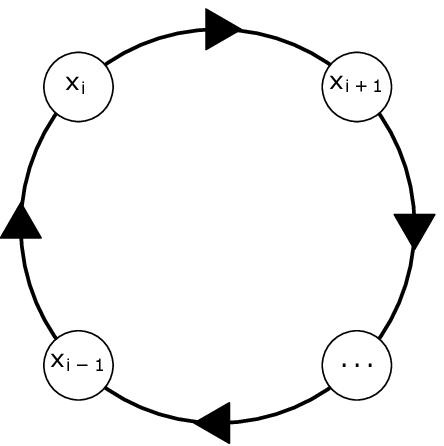}
      \caption{Visualization of transitions in a unimolecular protocol.}
      \label{fig:visual-unimolecular}
    \end{minipage}%
    \begin{minipage}{0.5\textwidth}
      \centering
      \includegraphics[scale=0.25]{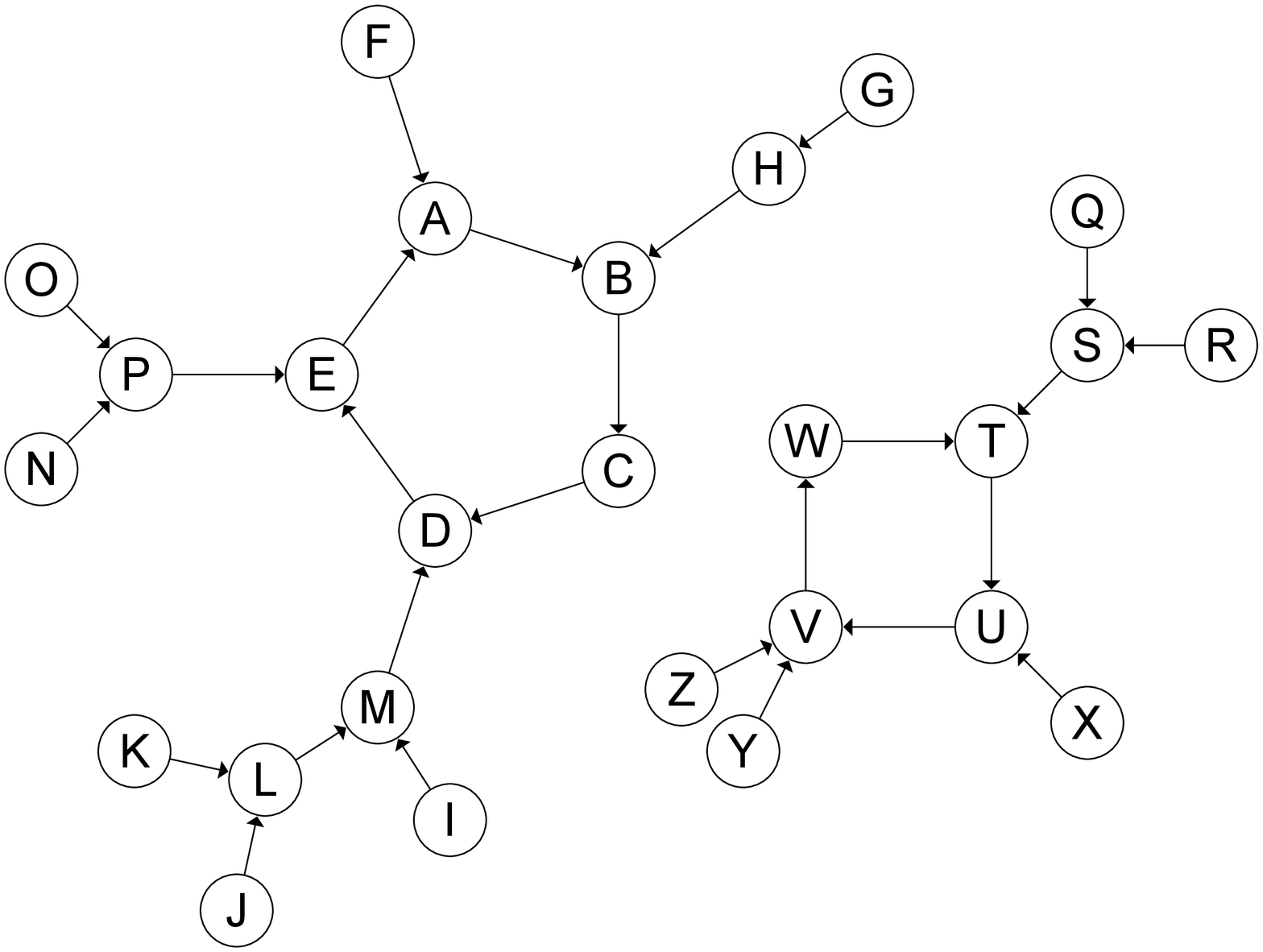}
      \caption{Example of a Functional Graph.}
      \label{fig:functional-graph}
    \end{minipage}
\end{figure}

\subsubsection{Product Protocols and Multiplication}
We will show that LPP-computable numbers are closed under multiplication. The proof is done through product protocols. Since our Main Theorem shows that LPP-computable numbers are essentially GPAC-computable, many other arithmetic operations can be done by taking a detour via GPACs. However, the resulting protocols are not so intuitive.  In addition, the product protocol is an important technique for our later construction in Section \ref{sec:e_of_lpp_and_gpacs}.
\begin{figure}
  \centering
  \includegraphics[width=\textwidth]{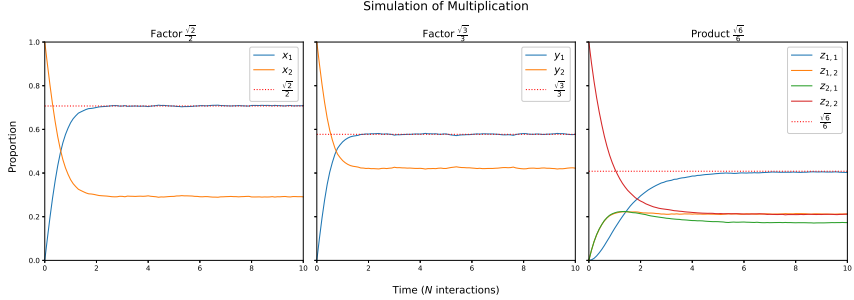}
  \caption{With ppsim, we can simulate a product protocol with input LPPs $\bx$ and $\by$ that compute $\frac{1}{\sqrt{2}}$ and $\frac{1}{\sqrt{3}}$. Note that the factor LPPs originate from \cite{bournez2009convergence}, and the product LPP is a result of applying Lemma~\ref{lemma:multiplication}. These simulations affirm the closure of LPPs under multiplication.}
 \label{fig:productProtocol}
\end{figure}
%!TEX root = ../main.tex
\begin{lemma}[Product of LPP-computable Numbers is LPP-computable]
\label{lemma:multiplication}
  Let ${P}_x$ and ${P}_y$ be LPPs that compute $\alpha$ and $\beta$ respectively.
  Denote their respective state vectors by $\bxn$ and $\by=(y_1,y_2,\cdots,y_m)$, initial configurations by $\bx(0)$ and $\by(0)$, and differential systems by $\bx'$ and $\by'$.
  Assume any differential equation in $\bx'$ or $\by'$ is a quadratic form.
  Then there exists an LPP, ${P}_{z}$, that computes the product $\alpha \beta$.
\end{lemma}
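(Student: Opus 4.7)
The plan is to build a tensor-product PLPP whose state set is $Q_z = [n] \times [m]$, carrying one species $z_{ij}$ for each pair $(i,j)$. Transitions are obtained by running $P_x$ on the first coordinate and $P_y$ on the second coordinate independently: the reaction $(i,j)+(k,l)\to(i',j')+(k',l')$ is given probability $\alpha^x_{i,k,i',k'}\cdot\alpha^y_{j,l,j',l'}$. Since the $x$-probabilities and $y$-probabilities each sum to one over their possible products, so do the joint probabilities; hence $P_z$ is a legitimate PLPP. I would initialize $z_{ij}(0) = x_i(0)\,y_j(0)$, which is rational (a legal LPP initialization), and whose total is $\sum_i x_i(0) \sum_j y_j(0) = 1$ by Observation~\ref{obs:one-trick}. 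The marked set is the Cartesian product $M_z = M_x\times M_y$.

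The next step is to establish the invariant $z_{ij}(t) = x_i(t)\,y_j(t)$ for all $t \geq 0$. By uniqueness of ODE solutions, it suffices to verify that, when $z_{ab}$ is replaced by $x_a y_b$, the balance equation for $z_{ij}$ in $P_z$ coincides with $(x_i y_j)' = x_i' y_j + x_i y_j'$. Using Observation~\ref{obs:component} together with the lemma's quadratic-form hypothesis, write $x_i' = f_i(\mathbf{x}) - 2x_i$ and $y_j' = g_j(\mathbf{y}) - 2 y_j$ with $f_i, g_j \in \pospoly$; one application of the ``one trick'' renders both right-hand sides pure quadratic forms. A direct expansion of Equation~(\ref{eq:balance_eq_prob_simple}) for the product protocol, using the factorization of transition probabilities, yields exactly these quadratic contributions collected coordinate-by-coordinate.

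Granting the invariant, the final computation is immediate:
\[
    \sum_{(i,j)\in M_z} z_{ij}(t) \;=\; \Bigl(\sum_{i\in M_x} x_i(t)\Bigr)\Bigl(\sum_{j\in M_y} y_j(t)\Bigr) \;\longrightarrow\; \alpha\beta,
\]
so $P_z$ computes $\alpha\beta$, as desired.

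The main technical obstacle is the combinatorial bookkeeping in the invariant step. Every ordered pair of reactants $(i',j'),(k,l)$ that produces a copy of $(i,j)$ contributes a monomial of the form $\alpha^x_{\cdot}\,\alpha^y_{\cdot}\, x_{i'} y_{j'} x_k y_l$, and one has to regroup these so that the $y$-factors (respectively $x$-factors) collapse to $y_j$ (respectively $x_i$) via $\sum_j y_j = \sum_i x_i = 1$, reproducing exactly $f_i(\mathbf{x})\, y_j + x_i\, g_j(\mathbf{y}) - 2 x_i y_j$. This is why the lemma assumes both $\bx'$ and $\by'$ are quadratic forms: the factorization of the product transitions carries the coordinate dynamics independently only when the ``consumption'' and ``production'' halves of each balance equation are already homogeneous of the same degree. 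Once this identification is checked, closure under multiplication follows from the tensor-product construction with no further work.
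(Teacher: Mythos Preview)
Your tensor-product idea is natural, but the invariant $z_{ij}(t)=x_i(t)\,y_j(t)$ is \emph{false} for the simultaneous-update protocol you describe, and this is not merely a bookkeeping issue. Take $P_x=P_y$ to be the two-state deterministic protocol whose only non-idle rule is $(1,1)\to(2,2)$; then $x_1'=-2x_1^{2}$ and hence $(x_1y_1)'=-2x_1y_1(x_1+y_1)$. A direct computation of your product balance equation gives
\[
  z_{11}'\;=\;-2z_{11}\bigl(z_{11}+z_{12}+z_{21}\bigr),
\]
which under $z_{ab}=x_ay_b$ becomes $-2x_1y_1(x_1+x_2y_1)$, matching $(x_1y_1)'$ only when $x_1=0$ or $y_1=0$. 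More structurally, if one splits $f_i=a_i+b_i$ and $g_j=c_j+d_j$ into first- and second-slot production, your balance equation substitutes to $-2x_iy_j+a_i(\bx)\,c_j(\by)+b_i(\bx)\,d_j(\by)$: both cross factors are genuinely quadratic in $\by$ (respectively $\bx$), so they cannot be collapsed via $\sum_l y_l=1$ to the single linear factors $y_j$ and $x_i$ that the product rule $(x_iy_j)'=x_i'y_j+x_iy_j'$ demands. (The marginals $\sum_j z_{ij}$ and $\sum_i z_{ij}$ \emph{do} follow $P_x$ and $P_y$, but that alone does not recover the product of marked masses.)

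The paper avoids this by starting from the product rule itself: it takes $z_{ij}'=x_i'y_j+x_iy_j'$, homogenizes each summand to bidegree $(2,2)$ via the one-trick (multiplying by $\sum_k y_k$ or $\sum_k x_k$), rewrites the result as a quadratic form in the $z$-variables, and then verifies the PP-implementability conditions of Corollary~\ref{thm:PP-cond}. At the protocol level this corresponds to updating \emph{one} coordinate per interaction---flip a fair coin and apply $P_x$ to the first coordinates or $P_y$ to the second---and that variant does preserve the factored invariant (up to a harmless time rescaling). Your argument can be repaired along these lines, but the simultaneous tensor transition rule as stated does not work.
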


\begin{proof}

Let $p_{x_i},q_{x_i},p_{y_j},q_{y_j}\in\pospoly $. Then we can write $x_i'=p_{x_i} - x_i q_{x_i}$ for all $x_i'\in \bx'$. Since $x_i'$ is a quadratic form, $p_{x_i}$ is a quadratic form and $q_{x_i}$ is a linear form. This notation generalizes to $y_j'\in \by'$ as well.
%!TEX root = ../main.tex
Introduce new state variables $z_{i,j}:= x_i y_j$ for all $(x_i,y_j)\in \bx \times \by$ where $\times$ is the Cartesian product (view $\bx$ and $\by$ as sets). Then we have the state vector $\bz=(z_{1,1},z_{1,2},z_{2,1},\dots,z_{n,m})$.
Suppose $M_\bx \subset \bx$ and $M_\by \subset \by$ are the marked states. For each $z_{i,j}=x_i y_j \in \bz$, mark $z_{i,j}$ if $x_i\in M_\bx$ and $y_j\in M_\by$. For the initial values, let $\bz(0)=\bx(0)\cdot\by(0)$.

To obtain the differential system $\bz'$, perform the following steps. First, differentiate $z_{i,j}=(x_i y_j)$. By the chain rule, we have
\[z_{i,j}' = (x_i y_j)' = x_i'y_j + x_iy_j'.\]
Next, substitute the equations for $x_i'$ and $y_j'$. Then, we have the cubic form,
\[z_{i,j}' =y_j (p_{x_i}-x_i q_{x_i}) + x_i (p_{y_j}-y_j q_{y_j}).\]
Apply the one-trick (Proposition~\ref{obs:one-trick}) to obtain a homogeneously degree four polynomial such that every term is exactly degree two in variable $x$ and exactly degree two in variable $y$. We have,
\[z_{i,j}' = y_j\left( \sum\limits_{k=1}^m y_k \right)(p_{x_i}-x_i q_{x_i})+ x_i \left(\sum\limits_{k=1}^n x_k \right) (p_{y_j}-y_j q_{y_j}).\]
After expanding the polynomial, group the positive and negative terms. Note that we can factor $x_i y_j$ from the collection of negative terms, which yields
\[	z_{i,j}' = \left(\sum\limits_{k=1}^m y_k y_j p_{x_i} +  \sum\limits_{k=1}^n x_k x_i p_{y_j} \right)  - (x_i y_j) \left( q_{x_i}\sum\limits_{k=1}^m y_k+ q_{y_j} \sum\limits_{k=1}^n x_k \right) .
\]
By substituting $z_{i,j}$ for the factored $x_i y_j$, we have
\[	z_{i,j}' = \left(\sum\limits_{k=1}^m y_k y_j p_{x_i} +  \sum\limits_{k=1}^n x_k x_i p_{y_j} \right)  - z_{i,j} \left( q_{x_i}\sum\limits_{k=1}^m y_k+ q_{y_j} \sum\limits_{k=1}^n x_k \right),
\]
which satisfies condition $(i)$ from Corollary~\ref{thm:PP-cond}.

Let $p_{z_{i,j}}= \left(\sum_{k=1}^m y_k y_j p_{x_i} + \sum_{k=1}^n x_k x_i p_{y_j} \right)$.
Without loss of generality, consider $p_{z_{i,j}}^*=\sum_{k=1}^m y_k y_j p_{x_i} $.
Note that every monomial in $p_{z_{i,j}}^*$ is of the form $x_u x_v y_j y_k$ where $x_u x_v$ is a monomial in $p_{x_i}$.
Then we can substitute either $z_{u,j}z_{v,k}$ or $z_{u,k} z_{v,j}$ for each  $x_u x_v y_j y_k$ in $p_{z_{i,j}}^*$.
Since $x_i^2$ is not a valid monomial of $p_{i_x}$, we will not have $z_{i,j}^2$ in ${p_{z_{i,j}}^*}$.
Thus, $p_{z_{i,j}}$ is a quadratic form in variable $z$ that satisfies condition $(ii)$ of Corollary~\ref{thm:PP-cond}.

Let $q_{z_{i,j}}=\left( q_{x_i}\sum_{k=1}^m y_k+ q_{y_j} \sum_{k=1}^n x_k \right)$.
Without loss of generality, consider $q_{z_{i,j}}^*=q_{x_i}\sum_{k=1}^m y_k$.
Note that every monomial in $q_{z_{i,j}}^*$ is of the form $x_u y_k$, where $x_u$ is a monomial in $q_{x_i}$.
Then we can substitute $z_{u,k}$ for each  $x_u y_k$ in $q_{z_{i,j}}^*$.
Thus, we can rewrite $z_{i,j}'$ as
\[ z_{i,j}' = p_{z_{i,j}} - z_{i,j} q_{z_{i,j}},
\]
where $z_{i,j}'$ is a quadratic form and meets all conditions of Corollary~\ref{thm:PP-cond}.

Hence, we have defined the state set, initial configuration, and differential system of a PP-implementable protocol that computes $\alpha\beta$.
\end{proof}

%!TEX root = ../main.tex
%%%%%%%%%%%%%%%%%%%%%%%%%%%%%%%%%%%%%%%%%%%%%%%%%
\section{Translating Bounded GPACs into PPs}\label{sec:e_of_lpp_and_gpacs}
\subsection{Construction Overview}
We will prove our main theorem in this section. It suffices to show that LPPs can simulate GPACs. Since one can transform a bounded GPAC into a bounded CRN with zeroed initial values \cite{huang2019real}, it is sufficient to consider only this type of CRN. We develop a four-stage algorithm that translates CRNs into LPPs.
\begin{itemize}
    \item Input: A CRN that computes a target number $\alpha$. We now enter a cascade of transformations:
    \item Stage 1: into a CRN-implementable quadratic system,
    \item Stage 2: into a TPP-implementable cubic form system,
    \item Stage 3: into a PP-implementable quadratic form system, and
    \item Stage 4: into a PLPP.
\end{itemize}

\subsection{Basic Operations}
 First, we discuss some basic operations as building blocks for our constructive proof. The notations used are as described in the preliminary section. The naming of operations ``$\varepsilon$-trick'' and ``$\lambda$-trick'' (the use of the Greek letters $\varepsilon$ and $\lambda$) originates in the proof of Lemma 5 in \cite{cBoFrKo12}.
\begin{operation}[Constant Dilation, ``$\varepsilon$-trick'']\label{op:ep-trick}
Let $\bx(t)$ be the solution to the ODE $\bx'(t)=p(\bx(t))$ and $\varepsilon\in \Q^+$ a positive constant, then $\bx(\varepsilon t)$ is the solution to the ODE $$\bx'(t)=\varepsilon p(\bx(t)).$$
\end{operation}
The operation is essentially a change of variable $t\to \varepsilon t$ and a direct result of the chain rule. The behavior of the ODE system will not change; it only alters the flow of time. Indeed, $t$ in the new system corresponds to $\varepsilon t$ of the original system. In our construction, we will use this operation to shrink the coefficient of an ODE system. The operation can be seen in \cite{doty2021ppsim,cBoFrKo12}.

We can do a more general change of variable (composition.)
\begin{operation}[Time Dilation (\cite{parker1996implementing}, Theorem 3) ]
Let $\bx(t)$ be the solution to the GPAC $\bx(t)'= p(\bx(t))$, $g(t)$ be a GPAC-implementable function, and $G(t)=\int_0^t g(s)ds$, then $\bx(G(t))$ is the solution to the GPAC
$$\bx'(t)=p(\bx(t))\cdot g.$$
In our construction,  we must choose $g(t)$ such that its integral satisfies
\[
     G(t)\vert_{t=\infty}=\int_0^\infty g(s)ds = \infty.
\]
and we call such a $G(t)$ as ``nonterminating time''.
\end{operation}
A time dilating operation, as above, controls the flow of time with the function $g(t)$. More accurately, $t$ in the new system corresponds to $G(t)=\int_0^t g(s)ds$ of the original system. If we want the new system to mimic the old system's limiting behavior (\ie, to achieve the value $\lim_{t\to \infty}x_1(t)$), we must guarantee $G(\infty)=\infty$. Thus, time in the new system is nonterminating and can go to infinity in the eyes of the old system. Otherwise, the new system would terminate somewhere in the middle, relative to the old system. To obtain nonterminating time, it is enough to consider functions that converge slowly, \eg, $g(t)=\frac{1}{1+t}$. Alternatively, one could ensure $g$ is bounded from below, \ie, $g(t)>c>0$, for some constant $c$.

Time dilation for CRNs can be found in \cite{klinge2016modular} (page 15, Theorem 3.7).

Next, we introduce an operation that shrinks variables to [0,1].
\begin{operation}[Variable Shrinking, or ``$\lambda$-trick'']
  \label{op:lambda}
Let $\bxn$ be a bounded CRN that computes a number $\alpha$ in $[0,1]$. That is, $\lim_{t\to \infty}x_1(t) =\alpha$. To transform the CRN into a LPP, we must ensure all other components, some of which may not not converge, fall in [0,1] for all $t$. We can pick a $0<\lambda<1$ sufficiently small and a constant $c>0$ such that
\[
     x_1 + \lambda (x_2 + \cdots + x_n) < 1-c.
\]
This is possible since all variables are bounded. Then we
\begin{itemize}
    \item make a change of variables: $x_2 \leftarrow \lambda x_2, \cdots, x_n \leftarrow \lambda x_n$, and
    \item introduce a new variable $x_0(t)$ such that $x_0(t) + x_1(t)+\cdots + x_n(t) =1$ by
    \begin{enumerate}
        \item  initialing $x_0(0)=1$ (recall $x_i(0)=0$ for $i=1,\cdots,n$), and
        \item  adding the ODE for $x_0$: $x_0'=-\sum_{i=1}^{n} x_i'$.
    \end{enumerate}
\end{itemize}
Note that $x_1$ remains unchanged in the operation, so it still computes $\alpha$.
\end{operation}

The biggest challenge in translating a CRN into an LPP is: the CRN does not necessarily preserve population, while the LPP requires it. In fact, our input CRN initializes at zero. However, $x_1$ needs to compute some number $\alpha$ as all other components remain positive throughout time; evidently, not preserving the population. A straightforward idea to ``balance'' the system, is to introduce a variable $x_0$, such that its derivative cancels all changes to the population size by other variables. That is, $x'_0= -\sum_{i=1}^n x_i'$.

However, the resulting $x_0$ is not CRN-implementable in general. As a result, it is not transformable to an LPP. The rationale is clear in the derivative of $x_i$. In general, $x_i'$ has the form
\[
    x_i' = p_i - q_i\cdot x_i, \quad \text{where $p,q\in \pospoly$.}
\]
A typical $p_i$ would contain at least one monomial $m$. Evidently, $x_0 \not \in m$ since the variable $x_0$ is newly introduced. Then $m$ will contribute to a negative term in $x_0'$ and $x_0 \not \in m$, making $x_0$ not CRN-implementable and, in turn, not PP-implementable. That is why Bournez et al.'s construction in \cite{cBoFrKo12} (Lemma 2) fails to produce a population protocol in general. To address the this issue, we purposefully \emph{multiply $x_0$} with every $x_i'$ and introduce the following operation.

\begin{operation}[Balancing Dilation, or ``Garbage Collection'' (``g-trick'')]
Let $\bxn$ be a CRN-computable system, (\ie, every component $x_i$ is CRN-computable). Then the following operation (multiplying an \emph{unknown} variable $x_0$ determined by an ODE) produces a conservative CRN-implementable system.
\[
\begin{cases}
  x_i' = (p_i - q_i \cdot x_i)\cdot x_0,  \quad \text{for $i\in\{1, \cdots, n\}$}\\
  x_0' = -\sum_{i=1}^{n} x_i'.
\end{cases}
\]
\end{operation}
What makes the operation different from a normal time dilation operation is, the function $x_0$ is not known in advance: It needs to adjust/customize according to the system it is dilating. In our construction, we also need to combine the operation with the $\lambda$-trick to achieve nonterminating time. This operation also turns a non-conservative system into a conservative \textit{CRN-implmentable}  one.

We will present our construction in the following subsections.

\subsection{Stage 1: Conversion to CRN-implementable Quadratic Systems}
The following is a CRN version of \cite{carothers2005some} (Theorem 1).
\begin{theorem}\label{thm:deg_leq_2}
    Any solution of a CRN is a solution of a CRN-implementable system of degree at most two.
\end{theorem}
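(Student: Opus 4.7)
My plan is to apply the classical reduction that introduces a fresh species $y_m$ for every higher-degree monomial appearing in the original system, and then to check that the CRN form $x' = p - q\, x$ with $p,q \in \pospoly$ survives every substitution. The bulk of the work is structural bookkeeping; the arithmetic is just the product rule.

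First, write the input CRN as $x_i' = P_i(\bx) - Q_i(\bx)\, x_i$ with $P_i, Q_i \in \pospoly$, and let $d$ be its total degree. For every monomial $m = x_{i_1}\cdots x_{i_k}$ of degree $2 \le k \le d$ in the original variables, introduce a fresh species $y_m$ with initial value $y_m(0) = \prod_j x_{i_j}(0)$. In each $P_i$ and $Q_i$, replace every monomial of degree at least two by the corresponding $y_m$; the resulting $\widetilde P_i$ and $\widetilde Q_i$ are linear forms (with constants allowed) in $\bx \cup \by$ with positive rational coefficients, so the rewritten $x_i' = \widetilde P_i - \widetilde Q_i\, x_i$ retains the CRN-implementable form and has degree at most two.

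The substantive step is deriving the ODE for each $y_m$. By the product rule and the original equations,
\[
  y_m' \;=\; \sum_{j=1}^{k}\Bigl(\prod_{l\neq j} x_{i_l}\Bigr)\bigl(P_{i_j} - Q_{i_j}\, x_{i_j}\bigr).
\]
The crucial structural identity $\bigl(\prod_{l\neq j} x_{i_l}\bigr)\, x_{i_j} = m = y_m$ collects every negative contribution as $y_m \cdot \sum_j Q_{i_j}$, yielding $y_m' = \widehat P_m - y_m\, \widehat Q_m$ with $\widehat P_m, \widehat Q_m \in \pospoly$. Each monomial of $\widehat P_m$ is a product of a degree-$(k{-}1)$ factor (a single $x$ when $k=2$, otherwise the auxiliary $y$ for $m/x_{i_j}$) with a monomial of $P_{i_j}$ (a constant, a single $x$, or an auxiliary $y$ after substitution), hence a product of at most two variables from $\bx \cup \by$; similarly $\widehat Q_m$ is a sum of monomials of degree $\le d-1$ and reduces to a linear form, so $y_m\, \widehat Q_m$ is quadratic. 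Thus the $y_m$ equation is CRN-implementable of degree at most two.

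Consistency $y_m(t) = \prod_j x_{i_j}(t)$ follows from uniqueness of solutions to PIVPs once the initial data agree, so any solution $\bx(t)$ of the input CRN extends to a solution $(\bx,\by)$ of the augmented, at-most-quadratic, CRN-implementable system. I do not expect a genuine mathematical obstacle; the only real challenge is the simultaneous bookkeeping over three constraints (degree at most two, positive rational coefficients, and every negative term factored through a ``self'' variable). The third of these works precisely because the input is already a CRN, so $Q_{i_j}\, x_{i_j}$ carries the factor $x_{i_j}$ that combines with $\prod_{l\neq j} x_{i_l}$ to rebuild $m$; dropping this hypothesis would break CRN-implementability of the augmented system, which is why one cannot shortcut the argument by starting from an arbitrary polynomial ODE.
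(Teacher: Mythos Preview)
Your argument is correct and follows essentially the same route as the paper: introduce a variable for every monomial up to the system's degree, use the product rule to derive its ODE, and exploit the identity $\bigl(\prod_{l\neq j}x_{i_l}\bigr)x_{i_j}=m$ to factor $y_m$ from every negative term so that the CRN form $p - q\cdot(\text{self})$ is preserved. The only cosmetic difference is that the paper absorbs the original $x_i$ into the monomial family (writing $v_{1,0,\dots,0}=x_1$ etc.), whereas you keep $\bx$ and $\by$ as separate blocks; this changes nothing.
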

\begin{proof}
Let $\bxn$ be a CRN, then each $x_i$ has the form $x_i'= p_i - q_i x$, where $p_i, q_i\in \pospoly$. We in introduce variables (we call them $v$-variables) for each monomial as follows:
    \[
        v_{i_1,\cdots, i_n}= x_1^{i_1}x_2^{i_2}\cdot x_n^{i_n}.
    \]
Specially, $v_{1,0,\cdots,0}=x_1$, $v_{0,1,0,\cdots, 0} = x_2$, and so on. Note that we can rewrite $p_i$ and $q_i$ in each $x_i'$ with the $v$-variables and denote them as $P_i$ and $Q_i$. We can see that $P_i$ and $Q_i$ are of degree one in terms of the $v$-variables.

 Consider the system that contains the collection of all $v_{i_1,\cdots, i_n}$'s. Notice that
\begin{align*}
    v_{i_1,\cdots,i_k,\cdots, i_n}'&= \sum_{k=1}^{n}x_1^{i_1}\cdots x_k^{i_k -1}\cdots x_n^{i_n} \cdot x_k'\\
    &=\sum_{k=1}^{n}v_{i_1,\cdots, i_k-1, \cdots, i_n}(P_k -Q_k x_k)\\
    &= \sum_{k=1}^{n}P_k v_{i_1,\cdots, i_k-1, \cdots, i_n} -  \sum_{k=1}^{n} Q_k (x_k\cdot v_{i_1,\cdots, i_k-1, \cdots, i_n})\\
    &= \sum_{k=1}^{n}P_k v_{i_1,\cdots, i_k-1, \cdots, i_n} -  \Big(\sum_{k=1}^{n} Q_k \Big) \cdot v_{i_1,\cdots, i_k, \cdots, i_n}.
\end{align*}
Note that the $v$-system is a CRN-implementable system with degree less than two.
\end{proof}
With this result, we can assume a CRN's associated polynomial system is quadratic with out loss of generality.
\subsection{Stage 2: Conversion to TPP-implementable Cubic Form Systems}
\begin{theorem}\label{thm:tpp_implementable}
    Any solution of a quadratic CRN is a solution of a TPP-implementable cubic form system.
\end{theorem}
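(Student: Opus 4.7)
The plan is to compose three previously introduced operations — the $\lambda$-trick, the balancing dilation (g-trick), and the one-trick — on the input CRN-implementable quadratic system $\bxn$ (where each $x_i' = p_i - q_i x_i$ with $p_i, q_i \in \pospoly$), and then verify that the resulting system satisfies the four conditions of Corollary~\ref{cor:kPP} for $k = 3$.

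I would first apply the $\lambda$-trick (Operation~\ref{op:lambda}) to introduce a slack variable $x_0$ with $x_0(0) = 1$ and $x_0 \geq c > 0$ throughout, so that the augmented state vector $(x_0, x_1, \ldots, x_n)$ satisfies $\sum_{i=0}^{n} x_i \equiv 1$. The resulting system is conservative, but its equation $x_0' = -\sum_{i=1}^{n} x_i'$ is not CRN-implementable, because the negative monomials of $x_0'$ inherit terms from the $p_i$ that do not contain $x_0$. I would then apply the g-trick: multiply each $x_i'$ (for $i \geq 1$) by $x_0$, producing $x_i' = (p_i - q_i x_i)\, x_0$, and redefine $x_0' = -\sum_{i=1}^{n} x_i' = x_0 \sum_i q_i x_i - x_0 \sum_i p_i$. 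Now every monomial of $x_0'$ carries an $x_0$ factor, so $x_0$ is CRN-implementable; the other $x_i'$ retain the form $P_i - Q_i x_i$ with $P_i, Q_i \in \pospoly$ (where $P_i = p_i x_0$ and $Q_i = q_i x_0$). Because $x_0 \geq c > 0$, the induced time rescaling $G(t) = \int_0^t x_0(s)\, \diff s$ diverges, so the limiting value $\lim_{t \to \infty} x_1(t) = \alpha$ is preserved.

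The system is now CRN-implementable and conservative, but the monomials appearing in each $x_i'$ have degrees in $\{1, 2, 3\}$. To homogenize to a cubic form, I would apply the one-trick, multiplying every monomial of degree $d < 3$ by $(x_0 + x_1 + \cdots + x_n)^{3-d}$, which equals $1$ on the invariant simplex; this preserves trajectories, positivity of coefficients, and conservation. It remains to check the four conditions of Corollary~\ref{cor:kPP} with $k = 3$. Conditions (i), (iii), (iv) are immediate from the construction, and the only nontrivial step — the main obstacle of the proof — is condition (ii), the ``no positive $x_i^3$ term in $x_i'$'' constraint. For $i \geq 1$, every positive monomial of $x_i'$ already carries an $x_0$ factor from the g-trick, and the one-trick only appends further $x_j$ factors, so no $x_i^3$ can appear. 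For $i = 0$, every positive monomial of $x_0'$ arises from a negative monomial of some $x_j'$ with $j \geq 1$, hence carries an $x_j$ factor with $j \neq 0$; again the one-trick does not eliminate this factor, so no $x_0^3$ can appear either. This completes the plan.
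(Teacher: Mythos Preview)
Your proposal is correct and follows essentially the same approach as the paper: combine the $\lambda$-trick, the balancing dilation by $x_0$, and the one-trick, then verify the four conditions of Corollary~\ref{cor:kPP} for $k=3$, with the crucial observation for condition~(ii) being that every positive monomial in $x_i'$ (for $i\ge 1$) carries an $x_0$ factor, while every positive monomial in $x_0'$ inherits some $x_j$ ($j\ge 1$) factor from a negative term of $x_j'$. The only cosmetic difference is ordering: the paper first applies the one-trick to homogenize $p_i$ and $q_i$ into forms $P_i,Q_i$ and \emph{then} multiplies by $x_0$, whereas you multiply by $x_0$ first and homogenize afterward; since both operations are multiplication by factors equal to $1$ on the simplex, they commute and yield the identical cubic form system, so the two proofs coincide.
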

\begin{proof} (sketch)
Given a quadratic CRN $\bxn$, we first apply the $\lambda$-trick to ``make room'' for a new variable $x_0$. That is, we choose a $0<\lambda<1$ and a constant $0<c<1$, such that $ x_1 + \lambda (x_2 + \cdots + x_n) < 1-c.$ Then apply the one-trick and balancing dilation to get a cubic form:
\begin{enumerate}
    \item Introduce a new variable $x_0$
    \item Rewrite each $p_i$ (quadratic), $q_i$ (linear) in $x_i'= p_i-q_i\cdot x_i$ to forms:
           \begin{itemize}
            \item for a linear monomial in $p_i$ of the form $a\cdot x_j$, apply the ``one''-trick and rewrite it as $a x_j \sum_{k=0}^{n}x_k$.
            \item for a constant term in $p_i$ of the form $a$, rewrite it as $a\cdot(\sum_{k=0}^{n}x_k)\cdot(\sum_{k=0}^{n}x_k)$.
            \item for a constant term $a$ in $q_i$, rewrite it as $a\cdot\sum_{k=0}^{n}x_k$.
           \end{itemize}
         We call the resulting forms $P_i$ and $Q_i$ respectively.
    \item Apply balancing dilation with $x_0$. The new system looks like
    \[
        \begin{cases}
         x_i'= (P_i - Q_i x_i)\cdot x_0, \quad\text{for $i=1,\cdots, n$}\\
         x_0'= -\sum_{i=1}^{n}x_i'.
        \end{cases}
        \quad \text{and }
                \begin{cases}
         x_i(0)= 0, \quad\text{for $i=1,\cdots, n$}\\
         x_0(0)= 1.
        \end{cases}
    \]
    It is a TPP-implementable cubic form system. See the appendix for a full proof.
\end{enumerate}
\end{proof}
%!TEX root = ../main.tex
%\textbf{Proof of Theorem~\ref{thm:tpp_implementable}:}
\begin{proof}(full)

Let $\bx =( x_1,\dots,x_n)$ be a quadratic CRN-implementable system. Then we have $x_i' = p_i - x_i q_i$ such that $p_i, q_i\in\pospoly$ for $i=1,\dots,n$. Since the CRN is quadratic, it follows that $\deg(p_i)\leq 2$ and $\deg(q_i)\leq 1$ for  $i=1,\dots,n$.

First, apply the $\lambda$-trick to shrink variables $x_2,\dots,x_n$. Choose $0\le \lambda \le 1$ and a constant $0<c<1$ , such that $x_1 + \lambda(x_2 + \cdots + x_n) < 1-c$. Then perform the change of variables: $x_2 \leftarrow \lambda x_2, \cdots, x_n \leftarrow \lambda x_n$.

Next, we introduce a new variable $x_0$ such that $\sum_{i=0}^n x_i = 1$. Use this summation and the ``one-trick'' to ensure that every term in $x_i'$ is quadratic for $i=1,\dots,n$. Once each $x_1',\dots,x_n'$ is a quadratic form, we signify these changes by writing $x_i'=P_i - x_i Q_i$ where $P_i, Q_i\in \pospoly$. Note that $P_i$ is a quadratic form and $Q_i$ is a linear form.

We can now perform time dilation with $x_0$. As a result, we have the cubic form system,
\[\begin{cases}
         x_i'= (P_i - Q_i x_i)\cdot x_0, \quad\text{for $i=1,\dots, n$}\\
         x_0'= -\sum_{i=1}^{n}x_i'
        \end{cases} \]
where we let $x_0(0)=1$ and $x_i(0)=0$ for $i=1,\dots,n$. Note that   $P_i x_0,Q_ix_0\in\pospoly$ and $x_i$ is in every negative term of $x_i'$ for $i=1,\dots,n$.
Some algebraic manipulation reveals that
$$x_0' = -\sum\limits_{i=1}^n x_i' =  -\sum\limits_{i=1}^n (P_i-Q_i x_i)\cdot x_0=  \sum\limits_{i=1}^n (Q_i x_i x_0- P_i x_0).$$Clearly, $Q_ix_0, P_ix_0\in \pospoly$ and $x_0$ is in every negative term of $x_0'$. Thus, $\bx'$ is CRN implementable.

Since $x_0$ is a factor of every term in the cubic form $\bx'$ , $x_i^3$ is not a positive term of $x_i'$ for $i=1,\dots,n$. For the case of $x_0'$, recall that the positive terms of $x_0'$ come from the negative terms of $x_1',\dots,x_n'$. Given that every negative term of $x_i'$ will have $x_i$ as a factor, every positive term of $x_0'$ will have at least one $x_i\neq x_0$ as a factor. Thus, $x_0^3$ cannot occur as a positive term in $x_0'$.

Hence, we have satisfied Corollary~\ref{cor:kPP} to obtain a TPP-implementable cubic form system.
\end{proof}

\begin{remark}
Note that in the above proof $x_0$ is bounded from below, away from zero. We can observe that:
\begin{enumerate}
    \item The variable $x_0$ starts from $1$ and remains greater than $c>0$ for all $t$. It will not introduce new equilibria.
    \item The new system has at most a linear slowdown compared to the old system, since $\int_0^t x(0) \diff t> \int_0^t c \diff t= ct$.
\end{enumerate}
\end{remark}
The variable $x_0$ acts like a source, distributing the total mass (sum $= 1$) to all other variables (initialized at 0).
\subsection{Stage 3: Conversion to PP-implementable Quadratic Form Systems}

\begin{theorem}\label{thm:pp-implementable-system}
    A number computable by a TPP-implementable function given by Stage 2 can be computed by the sum of a finite set of PP-implementable functions.
\end{theorem}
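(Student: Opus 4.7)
My plan is to apply a quadratization (degree-reduction) technique to the cubic form system produced in Stage 2, in the same spirit as the introduction of $v$-variables in the proof of Theorem~\ref{thm:deg_leq_2} and the $z$-variables in Lemma~\ref{lemma:multiplication}. Let $\bx = (x_0, x_1, \ldots, x_n)$ be the TPP-implementable cubic form system from Stage~2, with $\sum_{i=0}^{n} x_i = 1$ and $\lim_{t\to\infty} x_1(t) = \alpha$. I introduce new variables $z_{i,j} := x_i x_j$ for all $i,j \in \{0, \ldots, n\}$ and derive their dynamics from $z_{i,j}' = x_i' x_j + x_i x_j'$. Since $x_i'$ is homogeneously degree $3$ in $\bx$, $z_{i,j}'$ is homogeneously degree $4$ in $\bx$, hence homogeneously degree $2$ in $\bz$ once every monomial $x_a x_b x_c x_d$ is rewritten as a product $z_{a,b} z_{c,d}$ for a suitable pairing of the four indices.

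The first key observation is that the target number is recoverable from $\bz$ as a sum: by Observation~\ref{obs:one-trick} applied to $\bx$,
\[
    x_1 \;=\; x_1 \cdot \sum_{k=0}^{n} x_k \;=\; \sum_{k=0}^{n} z_{1,k},
\]
so $\lim_{t\to\infty} \sum_{k=0}^{n} z_{1,k}(t) = \alpha$. Thus marking the subset $\{z_{1,k}\}_{k=0}^{n}$ in the $\bz$-system computes $\alpha$ as a sum of PP-implementable functions, as required. Conservation of the new system is free: $\sum_{i,j} z_{i,j} = \bigl(\sum_i x_i\bigr)\bigl(\sum_j x_j\bigr) = 1$, and differentiating gives $\sum_{i,j} z_{i,j}' = 2 \bigl(\sum_i x_i'\bigr)\bigl(\sum_j x_j\bigr) = 0$, verifying condition~(iv) of Corollary~\ref{thm:PP-cond}.

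Next I would verify CRN-implementability (condition~(i)). Every negative monomial of $x_i'$ contains $x_i$ as a factor (Stage~2 is CRN-implementable), so every negative monomial of $x_i' x_j$ contains $x_i x_j = z_{i,j}$ as a factor; the symmetric argument for $x_i x_j'$ is identical. Pairing the two remaining $x$-factors arbitrarily into a single $z$ yields a negative term of the form $-Q\,z_{i,j}$ with $Q \in \pospoly$ in the new variables. Quadratic form (condition~(iii)) then follows because the whole right-hand side has been written as a degree-$2$ polynomial in $\bz$.

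The main obstacle is condition~(ii): no positive $z_{i,j}^2 = x_i^2 x_j^2$ term may appear in $z_{i,j}'$. A positive monomial $+\,x_i^2 x_j^2$ in $z_{i,j}'$ can arise from a positive cubic monomial $x_i x_j^2$ in $x_i'$ (producing $+\,x_i x_j^3$ paired with $x_j$, which is not the bad shape) or from $x_i^2 x_j$ in $x_j'$; the truly dangerous case is the degree-$4$ monomial $x_i^2 x_j^2$ itself, for which the naive pairing $z_{i,j} \cdot z_{i,j}$ is forbidden. I would exploit the freedom of pairing exactly as in the proof of Lemma~\ref{lemma:multiplication}: for each positive degree-$4$ monomial with multiset of indices $\{a,b,c,d\}$, choose any pairing other than $\{\{i,j\},\{i,j\}\}$. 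The exponent multiset $\{i,i,j,j\}$ always admits the alternative $\{\{i,i\},\{j,j\}\}$, yielding $z_{i,i}\, z_{j,j}$ instead of $z_{i,j}^{2}$, and for every other multiset at least one pairing avoids producing two copies of $z_{i,j}$. After making this choice globally, the resulting $\bz$-system satisfies all four conditions of Corollary~\ref{thm:PP-cond}, and the initial values $z_{i,j}(0) = x_i(0) x_j(0)$ are rational, completing the reduction.
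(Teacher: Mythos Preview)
Your plan is the same as the paper's: introduce $z_{i,j}=x_ix_j$, mark $\{z_{1,k}\}_k$, and verify Corollary~\ref{thm:PP-cond} for the $\bz$-system. Your treatment of conservation, of condition~(i), and of condition~(iii) is fine and matches the paper's argument.

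There is one small gap in your handling of condition~(ii). Your re-pairing trick $x_i^2x_j^2 \mapsto z_{i,i}z_{j,j}$ only gives an \emph{alternative} to $z_{i,j}^2$ when $i\neq j$; on the diagonal $i=j$ the multiset $\{i,i,i,i\}$ admits only the single pairing $z_{i,i}z_{i,i}=z_{i,i}^2$, so your argument as written does not cover $z_{i,i}'$. The fix is immediate: a positive $x_i^4$ in $z_{i,i}'=2x_ix_i'$ would require a positive $x_i^3$ term in $x_i'$, which is forbidden by condition~(ii) of Corollary~\ref{cor:kPP} for the TPP-implementable system from Stage~2. The paper handles this (and in fact the off-diagonal cases too) slightly differently, by exploiting the specific structure of Stage~2: every monomial of $x_i'$ carries an explicit factor $x_0$, so a case split on whether $0\in\{i,j\}$ shows directly that a positive $x_i^2x_j^2$ can only occur for $z_{0,j}$ with $j\neq 0$, where your re-pairing $z_{0,0}z_{j,j}$ applies. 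Either route closes the argument.
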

\begin{proof}(sketch)
    We utilize the self-product of the TPP-implementable cubic form system. Let $\bx= (x_0, x_1, \cdots, x_n)$ be the system resulting from the previous stage that computes a real number $\alpha$ with initial values $x_0 =1$ and 0 otherwise. We consider the system $\big(z_{i,j}\big)_{(i,j)\in [n+1]^2}$ where $z_{i,j}=x_i \cdot x_j$ and $[n+1]=\{0,1,\cdots,n\}$. Let $z_{0,0}(0)=1$ and 0 otherwise. The marked set $M_z= \{z_{1,j}| 0\leq j\leq n\}$. Then we can verify the $z$-system is PP-implementable and the sum of the marked variables traces $\alpha$. Check the appendix for a full proof.
\end{proof}
%!TEX root = ../main.tex
%\textbf{Proof of Theorem~\ref{thm:pp-implementable-system}}
\begin{proof}(full)

Let $\bx=(x_0,x_1,\dots,x_n)$ be the TPP-implementable cubic form system from Stage  $2$. Then we have
\[\bx' = \begin{cases}
         x_i'= (P_i - Q_i x_i)\cdot x_0, \quad\text{for $i=1,\dots, n$}\\
         x_0'= -\sum_{i=1}^{n}x_i'
        \end{cases} \]
where  $x_1$ computes some $\alpha$ and the system is initialized such that  $x_0(0)=1$ and $0$ otherwise.

We construct our proof based on the ``one trick'' given by Observation~\ref{obs:one-trick},
\[
\sum_{i=0} ^n x_i  \cdot \sum_{j=0} ^n x_j =1\cdot 1 =1.
\]
We begin by defining a new variable vector, $\mathbf{z}$.
Introduce new variables $z_{i,j}=x_ix_j$  for all $(x_i,x_j)\in \bx\times \bx$. To compute $\alpha$, define the marked states as $M_z=\{z_{1,j} \mid 0\leq j\leq n \}$. Given $\bx(0)$, let $z_{0,0}(0)=1$ and $0$ otherwise.

Now, we find $\mathbf{z'}$. For all $ z_{i,j} \in \bz$ we have $z_{i,j}'=(x_ix_j)'=x_i'x_j + x_ix_j'$, where $x_i', x_j'\in \bx'$. Recall that $\bx'$ consists of cubic forms. Therefore, $x_i'x_j + x_ix_j'$ is a quartic form in variable $\bx$. It follows from a change of variables from $\bx$ to $\mathbf{z}$ that $z_{i,j}'$ is a quadratic form.
For each $z_{i,j}\in \mathbf{z}$, we check conditions $(i)$ and $(ii)$ of Theorem~\ref{thm:PP-cond}. There are three non-trivial cases.

%%% CASE 1
Case 1: Assume $z_{i,j} = x_ix_j$ such that $i\neq 0$ and $j\neq 0$.
Then by the chain rule and substitution of $\bx'$, we have
\[ z_{i,j}' = x_i'x_j+x_ix_j' =  (x_0 P_i - x_i x_0 Q_i )\cdot x_j + x_i\cdot (x_0 P_j - x_j x_0 Q_j ).
 \]We then distribute and group terms by their sign to obtain:
 \[ z_{i,j}'= ( x_j x_0  P_i + x_i x_0 P_j) -( x_i x_j x_0 Q_i  + x_i  x_j x_0 Q_j )\]
 Note that we can factor $(x_i x_j)$ from every negative term,
 \[z_{i,j}'=  ( x_j x_0  P_i + x_i x_0 P_j) -( x_i x_j )(x_0 Q_i  + x_0 Q_j ).
\]
\begin{align*}
    z_{i,j}' &=  ( x_j x_0  P_i + x_i x_0 P_j) -( x_i x_j )(x_0 Q_i  + x_0 Q_j ) \\
            & = (z_{0,j} P_i + z_{0,i} P_j) - z_{i,j}\cdot (x_0 Q_i  + x_0 Q_j )
\end{align*}
and the terms $P_i$, $P_j$, $x_0Q_i$, and $x_0Q_j$ can be further written into linear form of $\{z_{i, j}\}_{i,j}$'s. We will not repeat these argument in the rest of the proof, though.

The above  implies $z_{i,j}$ is a factor of every negative term. Also, since $x_0$ can be factored from every positive term, we cannot have a positive $z_{i,j}^2$ (recall that $i, j$ are nonzero). Thus, conditions $(i)$ and $(ii)$ are satisfied.

%%% CASE 2
Case 2: Assume $z_{0,j}$ such that $j\neq 0$.  From the chain rule and substituting equations from $\bx'$, we have
\[z_{0,j}' = x_0' x_j + x_0 x_j'=\sum\limits_{i=1}^n  ( x_i x_0 Q_i  -x_0 P_i ) \cdot x_j + x_0\cdot (x_0 P_j -x_j x_0 Q_j)\]
Distribute and group terms by their sign to obtain,
\[z_{0,j}' =  x_0^2 P_j +  \sum\limits_{i=1}^n  x_i x_j x_0 Q_i     - (x_j x_0^2 Q_j + \sum\limits_{i=1}^n  x_0 x_j P_i)\]
 Note that we can factor $(x_0 x_j)$ from every negative term,
\[z_{0,j}' =  x_0^2 P_j +  \sum\limits_{i=1}^n  x_i x_j x_0 Q_i     - (x_j x_0)(x_0 Q_j + \sum\limits_{i=1}^n P_i)\]

which implies $z_{0,j}$ is a factor of every negative term. Suppose a positive term in $z'_{0,j}$ has the monomial $(x_0x_j)^2$. By carefully choosing the change of variable (i.e., choose $z_{j,j}z_{0,0}$, not $z_{0,j}^2$), we satisfy condition $(ii)$. If there is no positive term in $z'_{0,j}$ with the monomial $(x_0x_j)^2$, then condition $(ii)$ is met. Then, both conditions $(i)$ and $(ii)$ are met.

%%% CASE 3

Case 3: Assume $z_{0,0}$.
With the chain rule and substitution, we have
\[ z_{0,0}' = 2x_0 \cdot x_0' = 2x_0 \cdot\sum\limits_{i=1}^ n (x_0 x_i Q_i -x_0 P_i)\]
Distributing and grouping terms by their sign yields,
\[ z_{0,0}' = \sum\limits_{i=1}^n (2 x_0^2 x_i Q_i) -\sum\limits_{i=1}^n (2 x_0^2 P_i)\]
Since $x_0^2$ can be factored from every negative term, condition $(i)$ is satisfied. Since   $i=1,\dots,n$, it follows that $x_i\neq x_0$. Thus, condition $(ii)$ is satisfied.

In summary, we have defined the states $\mathbf{z}$, marked states $M_\mathbf{z}$, initial configuration $\mathbf{z}(0)$, and differential system $\mathbf{z'}$. Furthermore, for each $z_{i,j}'\in \mathbf{z'}$ we have established: each $z_{i,j}'$ is a quadratic form, $z_{i,j}'$ can be written to avoid a positive $z_{i,j}^2$ in $z_{i,j}'$, and the existence of $z_{i,j}$ in every negative term of $z_{i,j}'$. Hence, a TPP-implementable cubic form system can be transformed into a PP-implementable system that computes $\alpha$.

\end{proof}

\subsection{Stage 4: Converting PP-implementable Quadratic Form Systems to PLPPs}
In this subsection, we turn the PP-implementable quadratic form given as a result of the previous stage into a PLPP. The following construction is from the proof of Lemma 5 in \cite{cBoFrKo12}.
\begin{construction}\label{construction:PLPP}
Let $\bxn$ be a PP-implementable quadratic form. We denote the derivative $x_i$ as
\begin{equation}\label{eq:x_i_prime_g_form}
    x_i' = p_i-q_i x_i, \quad{i=1,2,\cdots,n,}
\end{equation}
where $p_i$ is a quadratic form and $q_i$ is a linear form.

We will apply the $\varepsilon$-trick (Operation~\ref{op:ep-trick}) to shrink the coefficient of the above system, where $\varepsilon \in \Q$ is an undetermined parameter. In order to align with the form in Equation~(\ref{eq:component_balance}), we add and subtract a term $2x_i$. Now the system has the form
\begin{align}\label{eq:plus_2x_minus_2x}
    x_i' &= \varepsilon (p_i-q_i x_i)+ 2x_i -2x_i, \quad{i=1,2,\cdots,n.}\\
         &= f_i(\bx), \quad\text{denote $f_i(\bx)= \varepsilon (p_i-q_i x_i)+2x_i$.}
\end{align}
We then use the one-trick to rewrite $f_i(\bx)= \varepsilon (p_i-q_i x_i)+2x_i\sum_{j=1}^n x_j$. Note that the term brings in $2x_ix_j$ for $j=1, \cdots, n$. We will then pick $\varepsilon$ sufficiently small
    \begin{enumerate}
        \item to ensure the $2x_i x_j$ terms dominate the monomials in $-qx_i$ such that $f_i(\bx)\in\pospoly$ for all $i$.
        \item and to ensure
     \begin{equation}
         \frac{\sum_{i, j\not=k}\coeff(x_jx_k, f_i)}{4}\leq 1 \quad\text{and}\quad
              \frac{\sum_{i, j}\coeff(x_{j}^2, f_i)}{2}\leq 1
     \end{equation}
     where $\coeff(x_jx_k, f_i)$ denotes the coefficient of $x_jx_k$ in $f_i$, and $i,j,k\in \{1,\cdots,n\}$.
    \end{enumerate}

Next, we construct a rule set for a PLPP. For an ordered pair $(x_j, x_k)$ and a term $\alpha_{i,j,k} x_j x_k$ in $f_i(\bx)$ we do the following:
      \begin{itemize}
          \item If $j=k$, add a rule $(x_j,x_j)\to \frac{\alpha_{i,j,k}}{2} (x_i,x_i);$
          \item otherwise, add two rules $(x_j,x_k)\to \frac{\alpha_{i,j,k}}{4} (x_i,x_i)$ and $(x_k, x_j)\to \frac{\alpha_{i,j,k}}{4} (x_i, x_i);$
          \item In the above rules, if $s=\sum_{i=1}^n \alpha_{i,j,k}< 1$ for an order pair $(k,l)$, we add an idle reaction $$(x_j, x_k)\to (1-s) (x_j, x_k);$$
      \end{itemize}
If a term $x_j x_k$ appears in $f_i(\bx)$, it implies that the pair $(x_j, x_k)$ will produce $x_i$. We use the all-in greedy strategy, that is, we produce two $x_i$'s and let the pair $(x_k,x_j)$ do the same. So we need to divide the coefficient $\alpha_{i,j,k}$ by 4 for $j\not=k$; yet, when $k=j$ we don't have a flipped order pair, so we only divide the coefficient by 2.
\end{construction}

There need to make sure that the rule set generated above forms a PLPP.
\begin{theorem}\label{thm:PLPP}
    Let $\bxn$ be a quadratic system satisfying $\sum_{i=1}^n x_i =1$.
    Then it can be turn into a PLPP by Construction~\ref{construction:PLPP} if and only if it is PP-implementable.
\end{theorem}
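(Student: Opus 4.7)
The plan is to establish both directions of the biconditional by directly matching the balance equation of the PLPP produced by Construction~\ref{construction:PLPP} against the transformed ODE $x_i' = f_i(\bx) - 2x_i$, where $f_i(\bx) = \varepsilon(p_i - q_i x_i) + 2x_i \sum_j x_j$ arises from the $\varepsilon$-trick combined with the one-trick.

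For the ``if'' direction, assume PP-implementability, so that $x_i' = p_i - q_i x_i$ with $p_i, q_i \in \pospoly$ and no positive $x_i^2$ term appears in $x_i'$. I would first use conservation to show the rule-weight sum at each input pair is exactly $1$: summing $f_i$ over $i$ gives $\sum_i f_i = \varepsilon \sum_i x_i' + 2\left(\sum_j x_j\right)^2 = 2\left(\sum_j x_j\right)^2$, so $\sum_i \coeff(x_j^2, f_i) = 2$ and $\sum_i \coeff(x_j x_k, f_i) = 4$ for $j \neq k$. This makes the constraints $\sum_i \coeff(x_j^2, f_i)/2 \leq 1$ and $\sum_i \coeff(x_j x_k, f_i)/4 \leq 1$ in the construction hold with equality, so the idle rules carry weight zero. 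Next, each $\alpha_{i,j,k}$ must be non-negative; off-diagonal entries are automatic from $p_i \in \pospoly$, while the only sensitive case is $\alpha_{i,i,i} = 2 + \varepsilon(\coeff(x_i^2, p_i) - \coeff(x_i, q_i))$, whose upper bound $\alpha_{i,i,i}/2 \leq 1$ is exactly the no-positive-$x_i^2$ hypothesis and whose lower bound $\alpha_{i,i,i} \geq 0$ is arranged by picking $\varepsilon$ small. Finally, a monomial-by-monomial substitution into Eq.~(\ref{eq:balance_eq_prob_simple}) reassembles $x_r' = f_r - 2x_r$; the combinatorial factors of $2$ (for $j=k$) and $4$ (two symmetric rules for $j\neq k$) are precisely the divisors chosen by the construction.

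For the ``only if'' direction, suppose the construction succeeds in producing a valid PLPP. Observation~\ref{obs:component} immediately pins down the balance equation to the shape $x_r' = g_r(\bx) - 2x_r$ with $g_r \in \pospoly$ a quadratic form; rewriting $-2x_r = -x_r \cdot 2\sum_j x_j$ via the one-trick exhibits the CRN-implementable form of Corollary~\ref{thm:PP-cond}(i). Conditions (iii) and (iv) are hypotheses of the theorem. For (ii), no positive $x_r^2$ in $x_r'$, I would invert the diagonal-probability computation from the ``if'' direction: the validity of the PLPP requires $\alpha_{r,r,r}/2 \leq 1$, and substituting $\alpha_{r,r,r} = 2 + \varepsilon(\coeff(x_r^2, p_r) - \coeff(x_r, q_r))$ forces $\coeff(x_r^2, p_r) \leq \coeff(x_r, q_r)$, which is exactly the assertion that the $x_r^2$ coefficient in $p_r - q_r x_r$ is non-positive.

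The main obstacle will be the balance-equation bookkeeping in step 3 of the ``if'' direction: one must combine the global consumption $-2x_r$ with the productions streaming out of every input pair (and the idle rules when present) and verify they reassemble $f_r$ term by term without cross-contamination. The pivotal observation making this go through cleanly is that conservation is precisely what forces the rule weights at each input pair to sum to $1$ with no slack, so the idle rules contribute nothing and no spurious production leaks into other components. This also exposes why the no-positive-$x_i^2$ hypothesis is not a mere technical convenience but a genuine necessity: without it the diagonal probability $\alpha_{i,i,i}/2$ would exceed $1$, and no binary PP rule set can represent a net $+x_i^2$ contribution to $x_i'$.
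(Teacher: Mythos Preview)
Your proposal is correct and follows essentially the same line as the paper: both verify that the rule weights form a probability distribution at each input pair by using conservation to pin down the coefficient sums (your identity $\sum_i f_i = 2(\sum_j x_j)^2$ is a clean compression of the paper's case split into $j=k$ versus $j\neq k$), and both identify the diagonal entry $\alpha_{i,i,i}$ as the place where the no-positive-$x_i^2$ condition becomes necessary. Your explicit treatment of the ``only if'' direction via inverting the diagonal computation matches the paper's contrapositive paragraph at the end of its proof.

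One small imprecision worth tightening: the claim that ``off-diagonal entries are automatic from $p_i\in\pospoly$'' overlooks the entries $\alpha_{i,i,j}$ with $j\neq i$, where $-\varepsilon q_i x_i$ contributes $-\varepsilon\,\coeff(x_j,q_i)$ to $\coeff(x_ix_j,f_i)$. These are not guaranteed non-negative by $p_i\in\pospoly$ alone; they too need $\varepsilon$ small, which is exactly what point~1 of Construction~\ref{construction:PLPP} arranges. This does not affect your argument's validity, only its phrasing.
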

\begin{proof}
    The ``only if'' part is clear. Therefore, it suffices to show the ``if'' direction.
    We adopt the same symbols and notations as in Construction~\ref{construction:PLPP}. Recall that we denote $f_i(\bx)= \varepsilon (p_i-q_i x_i)+2x_i\sum_{j=1}^n x_j$. Consider an arbitrary ordered pair $(x_j, x_k)$. We need to show that in the construction:
    \begin{enumerate}
        \item  The selection of $\varepsilon$ satisfying the two conditions is always possible.
        \item  The rule set generated by the construction forms an PLPP. Specifically, for the ordered pair $(x_j, x_k)$, we must show that $\sum_{i=1}^n \alpha_{i, j,k}$ =1 and $\alpha_{i,j,k}>0$ for all $i$. That is, the $\alpha_{i,j,k}$'s form a discrete probability measure with $j,k$ fixed.
    \end{enumerate}
    We break the proof into two cases:
    \begin{itemize}
        \item Case $j\not = k$:
            The term $x_j x_k$ occurs in $f_i(\bx)$ from two sources.
    \begin{itemize}
        \item Case 1: $ x_j x_k \in \varepsilon (p_i-q_i x_i)$. The coefficients of such terms for all $i$ must sum to zero, since the system is assumed to be PP-implemetable.
        \item Case 2:  $x_j x_k  \in 2x_i(x_1+\cdots+ x_n)$. We have two sources:
        $x_j x_k$ occurs in $f_j(\bx)$ as a term in $2x_j(x_1+\cdots x_k + \cdots+ x_n)$ and similarly in $f_k(\bx)$ as a term in $2x_k(x_1+\cdots x_j + \cdots+ x_n)$.
    \end{itemize}
      The sum of these coefficients is $4$. Note that the coefficients are divided by 4 in the construction, which ensures the resulting sum of $\alpha_{i,j,k}$ is one. One can verify that the selection of $\varepsilon$ is always possible under this case.
    \end{itemize}
    \item Case $j= k$: For term $x_j^2$, which occurs in $f_i(\bx)$, there are two subcases:
        \begin{itemize}
            \item $i\not=j$: Then $x_j^2 \in \varepsilon p_i$ and the coefficient must be positive.
            \item $i=j$: Then $x_j^2=x_i^2 \in -\varepsilon q + 2x_i \sum_{k=0}^n x_k$ and coefficient is less than 2. ($x_j^2=x_i^2$ can not occur in $\varepsilon p$ since the system is PP-implementable.)
        \end{itemize}
        Recall that when constructing the rule set, the coefficients are divided by 2. As a result, generated $\alpha_{i,j}$'s are positive and less than 1. Due to the system's conservative nature, the coefficients $x_j^2$ in $\varepsilon (p_i -q_i x_i)$ for $i$'s in $\{1,\cdots, n\}$ will sum to zero.  Since the  $2x_j \sum_{k=1}^n x_k$ term introduces a $2x_j^2$ term,  the resulting coefficient sum for $x_j^2$ is $2$. In the construction, these coefficients are always divided by 2 to generate $\alpha_{i,j}$ in $f_i(\bx)$. Thus, the $\sum_{i=1}^n \alpha_{i,j} =1$. One can then verify that the selection of $\varepsilon$ is feasible.

        In contrast, assume the system is not PP-implementable. Then there is an $x_i^2 \in \varepsilon p_i$ with a positive coefficient. Once combined with the term $2x_i^2 \in 2x_i \sum_{k=1}^{n} x_k$, the sum of coefficients will be greater than two. Then dividing by 2 will generate an $\alpha_{i, i}>1$, which is not permitted by PLPPs. Hence, the selection of $\varepsilon$ is not feasible in this case.
\end{proof}

The following theorem can help in transforming a PLPP to an LPP.
\begin{theorem}[\cite{cBoFrKo12}, Lemma 4]\label{thm:derandomization}
Let $\nu\in[0, 1]$ , and assume that there exists a  PLPP computing $\nu$, with rational probabilities. Then there exists a (deterministic) LPP computing $\nu$.
\end{theorem}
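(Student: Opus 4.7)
The plan is to simulate the rational transition probabilities of the PLPP by deterministic transitions on a refined state space equipped with ``phase tags,'' arranged so that the phase marginals remain uniform. Let $D$ be the least common multiple of the denominators of the $\alpha_{i,j,k,l}$'s. Then $m_{i,j,k,l} := D^2 \alpha_{i,j,k,l} \in \N$, $\sum_{(k,l)} m_{i,j,k,l} = D^2$ for every fixed $(i,j)$, and since each $\alpha$ is a multiple of $1/D$ the partial sums $\sum_l m_{i,j,k,l}$ and $\sum_{k'} m_{i,j,k',k}$ are multiples of $D$.

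I would refine the state set to $Q' := Q \times \{0,1,\ldots,D-1\}$ and write $x_{i,a}$ for the proportion in $(q_i,a)$, with $X_i := \sum_a x_{i,a}$. The initial configuration splits uniformly, $x_{i,a}(0) := r_{q_i}/D$, and the marked set is $M' := M \times \{0,\ldots,D-1\}$ so that $\sum_{(q_i,a) \in M'} x_{i,a} = \sum_{q_i \in M} X_i$. For each ordered pair $(q_i, q_j)$, I would partition $\{0,\ldots,D-1\}^2$ into blocks $B_{k,l}$ of size $m_{i,j,k,l}$ and declare the single deterministic rule $(q_i, a)(q_j, b) \to (q_k, a')(q_l, b')$ whenever $(a,b) \in B_{k,l}$. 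Block-by-block, the output-phase assignments $(a,b) \mapsto a'$ and $(a,b) \mapsto b'$ are chosen so that, for each symbolic output state, each phase value is produced the same number of times; this is exactly where the divisibility conditions above are used, and this is what keeps the uniform-phase manifold $\{x_{i,a} = X_i/D\}$ invariant under the refined ODE.

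On this manifold, a direct computation shows that summing the balance equation of the refined LPP over phases reproduces the PLPP balance equation for the $X_i$'s term by term: for fixed $(i,j)$, each product $x_{i,a} x_{j,b}$ equals $X_i X_j / D^2$, and the $D^2$ phase pairs distribute among symbolic outcomes $(q_k, q_l)$ with multiplicities $m_{i,j,k,l}$, producing exactly the contribution $\alpha_{i,j,k,l} X_i X_j$ expected in the PLPP. Consequently the projected dynamics follow the PLPP's ODE, the marked sum $\sum_{q_i \in M} X_i(t)$ converges to $\nu$, and the refined system is a deterministic LPP with rational initial values.

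The main obstacle is the combinatorial design of the output-phase maps so that the uniform-phase manifold is genuinely a dynamical invariant rather than merely an equality in expectation; without invariance, the reduction of the refined ODE to the PLPP's fails. The divisibility facts above make the design feasible -- one can, for example, arrange each block $B_{k,l}$ by a Latin-square-like pattern and pool contributions using the cyclic symmetry of $\Z/D\Z$ to force uniform marginals on output phases. Once the invariance is in place, the remaining verification -- each input pair has a unique output pair, initial values are rational, and the marked proportion matches that of the PLPP -- is straightforward bookkeeping.
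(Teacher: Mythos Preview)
The paper does not supply its own proof of this statement; it is quoted verbatim as Lemma~4 of \cite{cBoFrKo12}, with only the one-sentence remark that the cited construction ``still correctly encodes the balance equation'' under the extended definition. So there is no in-paper argument to compare against line by line.

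That said, your phase-tag refinement is the standard derandomization and is the construction that underlies the cited lemma. Two minor comments. First, since $D$ already clears every denominator, each $m_{i,j,k,l}=D^{2}\alpha_{i,j,k,l}$ is itself a multiple of $D$ (not merely the partial sums), so the per-block balanced output-phase assignment you need is immediate: list the elements of $B_{k,l}$ and hand out output phases cyclically---no Latin-square subtlety is required. Second, your identification of the invariant-manifold step as the crux is correct and sufficient: once every block's first- and second-output phase maps hit each value in $\{0,\dots,D-1\}$ equally often, the refined vector field is tangent to $\{x_{i,a}=X_i/D\}$, the trajectory stays there from your uniform rational initial data, and summing the refined balance equation over phases reproduces the PLPP balance equation term by term. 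The residual checks (determinism of each rule, rationality of initial values, marked set $M'=M\times\{0,\dots,D-1\}$) are indeed bookkeeping.
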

Note that although the above theorem is based on the original notion of LPP-computable number with the finite-equilibria constraint, the construction in the proof of the theorem still correctly encodes the balance equation. So the theorem also holds under the new definition.

We therefore finish the proof of our main theorem.
\begin{main-theorem}\label{thm:main}
    LPPs compute the same set of numbers in [0,1] as GPACs and CRNs.
\end{main-theorem}
\begin{corollary}
    Algebraic numbers are LPP computable.
\end{corollary}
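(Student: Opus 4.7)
The plan is to combine the Main Theorem with the standard fact that every algebraic number in $[0,1]$ is GPAC-computable. Once both ingredients are in hand, the four-stage translation of the Main Theorem immediately yields an LPP computing any such number.

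First, I would exhibit a bounded GPAC (equivalently, a bounded CRN after the $x^{+}/x^{-}$ decomposition) that computes an arbitrary algebraic $\alpha\in[0,1]$. Fix $p\in\Z[x]$ with $p(\alpha)=0$ and set $E(y)=p(y)^{2}$. Gradient descent on $E$ gives the one-variable PIVP
\[
   y' = -\,p(y)\,p'(y),\qquad y(0)=r,
\]
where $r\in\Q$ is chosen in the basin of attraction of $\alpha$. Since $p$ and $p'$ have rational coefficients, this is a polynomial vector field, so the system is a bounded GPAC, and $y(t)\to \alpha$ as $t\to\infty$. This is essentially the construction already used by Bournez, Fraigniaud and Koegler.

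Next, I would feed this bounded GPAC into the four-stage pipeline of the Main Theorem: Stage~1 (Theorem~\ref{thm:deg_leq_2}) reduces it to a CRN-implementable quadratic system by introducing $v$-variables for each monomial; Stage~2 (Theorem~\ref{thm:tpp_implementable}) applies the $\lambda$-trick and the balancing dilation to obtain a TPP-implementable conservative cubic form system; Stage~3 (Theorem~\ref{thm:pp-implementable-system}) takes the self-product $z_{i,j}=x_{i}x_{j}$ to land in a PP-implementable quadratic form; Stage~4 (Construction~\ref{construction:PLPP} together with Theorem~\ref{thm:PLPP}) produces a PLPP by rescaling with a sufficiently small $\varepsilon$ and assigning probabilities to ordered pairs. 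Finally, Theorem~\ref{thm:derandomization} converts the PLPP into a deterministic LPP.

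The main obstacle is not in the corollary itself but has already been absorbed into the Main Theorem: one needs to ensure the starting GPAC is bounded and that the chosen initial value lies in the basin of the correct root. Both are routine for algebraic numbers, since the roots of $p$ are isolated and one may clip the descent to a bounded interval around $\alpha$. As a by-product, this pipeline repairs the gap noted in Bournez et al.'s original LPP for algebraic numbers, because every stage explicitly produces an object that is PP-implementable rather than only GPAC-implementable.
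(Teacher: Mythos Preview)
Your proposal is correct and follows essentially the same approach as the paper: cite the fact that every algebraic number in $[0,1]$ is GPAC-computable (the paper simply invokes Lemma~5 of Bournez et~al., whereas you spell out the gradient-descent PIVP $y'=-p(y)p'(y)$), and then apply the Main Theorem. Your extra detail on the four stages is accurate but unnecessary for the corollary, since all of that work is already packaged inside the Main Theorem.
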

\begin{proof}
    Algebraic numbers are computable computable by a GPAC \cite{cBoFrKo12} (Lemma 5) so they are also LPP-computable by the main theorem.
\end{proof}
The above corollary fix the gap in Bournez et al.’s construction.

\begin{corollary}\label{thm:trans}
    Famous math constants $\frac{\pi}{4}$, $e^{-1}$, Euler's $\gamma$, Dottie number, and Catalan's constant are LPP-computable.
\end{corollary}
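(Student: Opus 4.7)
The plan is to reduce the claim entirely to the Main Theorem, which states that every bounded GPAC/CRN-computable number in $[0,1]$ is LPP-computable. So for each of the five constants it suffices to exhibit a bounded GPAC (equivalently, a bounded CRN) whose designated variable converges to that constant; the transfer to an LPP is then automatic via the four-stage construction in \Cref{sec:e_of_lpp_and_gpacs}.

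First I would dispatch $e^{-1}$ directly from the Motivating Example: the CRN given there implements $F(t) = \tfrac12 e^{e^{-t}-1}$, which is bounded in $[0,\tfrac12]$ and converges to $\tfrac12 e^{-1}$. A trivial rescaling (or, alternatively, running the same construction from initial value $F(0)=1$, $E(0)=1$, which keeps everything in $[0,1]$ after the $\lambda$-trick) produces a bounded GPAC for $e^{-1}\in [0,1]$. For $\pi/4$, $\gamma$, the Dottie number, and Catalan's constant, I would simply cite the explicit bounded GPAC/CRN constructions already in the literature: $\pi$ (and hence $\pi/4$ after an obvious rescaling into $[0,1]$) is shown real-time computable by a bounded CRN in \cite{huang2019real}; Euler's $\gamma$ and the Dottie number are handled in \cite{huang2020chemical}; Catalan's constant is computable by a bounded GPAC by the general results of \cite{bournez2017polynomial} applied to its standard integral representation. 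In each case the construction yields a bounded GPAC/CRN whose first component converges to the target constant (or to a simple rational multiple of it lying in $[0,1]$).

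Having a bounded GPAC/CRN for each constant, I invoke the Main Theorem, which is proved constructively via Stages 1--4: reduction to a CRN-implementable quadratic system (\Cref{thm:deg_leq_2}), then to a TPP-implementable cubic form system (\Cref{thm:tpp_implementable}), then to a PP-implementable quadratic form (\Cref{thm:pp-implementable-system}), and finally to a PLPP via Construction~\ref{construction:PLPP} and \Cref{thm:PLPP}; derandomization via \Cref{thm:derandomization} then yields a deterministic LPP. Applying this cascade to each of the five bounded GPACs produces an LPP that computes the corresponding constant.

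The only real obstacle is bookkeeping: making sure each cited construction truly yields a \emph{bounded} GPAC/CRN rather than an unbounded one, since the Main Theorem's hypothesis is boundedness, and making sure the final value lies in $[0,1]$ (applying an $\varepsilon$-scaling or choosing the marked-set carefully when needed). Once that routine check is done for each of the five constants, the corollary follows immediately.
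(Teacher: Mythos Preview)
Your overall strategy---reduce each constant to a bounded GPAC/CRN and then invoke the Main Theorem---is exactly the paper's approach, and your treatment of $e^{-1}$, $\pi/4$, $\gamma$, and the Dottie number is fine (the paper simply cites \cite{huang2020chemical} for all four, but your slight variation for $e^{-1}$ via the Motivating Example works too).

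The gap is Catalan's constant. Citing \cite{bournez2017polynomial} does not do the job: that paper establishes Turing-completeness of GPACs, and its constructions (simulating Turing machines via PIVPs) are emphatically \emph{not} bounded---unbounded growth of variables is precisely how they encode arbitrary computation. So ``the general results of \cite{bournez2017polynomial} applied to its standard integral representation'' does not hand you a bounded GPAC, and your closing remark that boundedness is mere ``bookkeeping'' papers over a real hole. The paper fills this hole by doing the work explicitly: it takes the integral representation $G=\tfrac12\int_0^\infty t/\cosh t\,dt$, rewrites the integrand as $(te^{-t})\cdot\frac{1}{1+e^{-2t}}$, and introduces auxiliary variables $R=te^{-t}$, $E=e^{-t}$, $V=\frac{e^{-2t}}{1+e^{-2t}}$ to obtain a concrete PIVP
\[
G'=R(1-V),\quad R'=E-R,\quad E'=-E,\quad V'=-2E^2(1-V)^2,
\]
with rational initial data, whose variables are visibly bounded. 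That explicit construction is the missing ingredient in your proposal.
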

\begin{proof}
    The first four numbers are GPAC/CRN computable (see \cite{huang2020chemical}). Note that all real-time computable numbers by CRNs in \cite{huang2019real} are computable in this paper. One just need to disregard the real-time constraint. Then by our main theorem, they are also LPP computable. It suffices to show that Catalan's constant is GPAC computable. We use the formula
$G = \frac{1}{2}\int_0^\infty \frac{t}{\cosh{t}} dt.$
We have
    \begin{align*} G&=\int_0^\infty \frac{t}{e^t + e^{-t}}dt,\quad\text{since $cosh(t)=\frac{e^{t}+e^{-t}}{2}$}\\
&=\int_0^\infty (t\cdot e^{-t})\cdot \frac{1}{1+e^{-2t}}dt.
\end{align*}

We let
\[
    R= t\cdot e^{-t}, \quad E=e^{-t}, \quad V = \frac{e^{-2t}}{1+ e^{-2t}},
\]
and
\[
    G(t) =\int_0^t  (te^{-t})\cdot \frac{1}{1+e^{-2t}}dt = \int_0^t  R\cdot(1-V)dt.
\]
Taking derivatives against $t$, we get
\[
    \begin{cases}
        G^{\, \prime} = R(1-V)\\
        R^{\, \prime} = E - R \\
        E^{\, \prime}= -E \\
        V^{\, \prime}= (1-V)^2 \cdot (-2E^2),
    \end{cases}
    \quad
        \begin{cases}
        G(0) = 0\\
        R(0) = 0 \\
        E(0)=1 \\
        V(0)=\frac{1}{2}.
    \end{cases}
\]
Therefore Catalan's constant can be computed by the above PIVP. Note that the system's non-zero initial values can be transformed into a system with zeroed initial values (\cite{huang2019real}, Theorem 3). Hence by our Main Theorem, Catalan's constant is LPP computable.
\end{proof}

This result answers Bournez et al.'s question about whether solutions of trigonometric equations (Dottie number) and $\frac{\pi}{4}$ are ``computable'' asymptotically by population protocols.

%!TEX root = ../main.tex
%%%%%%%%%%%%%%%%%%%%%%%%%%%%%%%%%%%%%%%%%%
\section{Conclusion and Discussion}\label{sec:conclusion}
%%%%%%%%%%%%%%%%%%%%%%%%%%%%%%%%%%%%%%%%%%
In this paper, we extend the notion of LLP-computable numbers and connect it with GPAC/CRN-computable numbers. Moreover, we show that LLPs and GPACs/CRNs essentially compute the same set of real numbers. The result, to some degree, says LPPs, or more straightforwardly, population protocols with the mass-action semantic, can simulate GPACs in some way. We would ask a natural question: What is the next ``weak'' or minimal (according to some measure) model that can simulate GPACs? Unimolecular protocols are provably incapable since they only compute rational numbers. However, a model discussed in \cite{gerin2012algebraic}, which can be viewed as a two-state (``black'' or ``white'') $k$-PP with restrictions on reactions (the product of a reaction must either be all black or all white), can compute algebraic numbers such as $\frac{3-\sqrt{5}}{2}$ but not rational numbers like $\frac{1}{5}$. Diving into the sub-models of population protocols and characterizing the power spectrum will be an interesting avenue for further exploration.

The ODE systems we used to compute transcendental numbers have a continuum of equilibria. We have not addressed the semistability nor the system's convergence rate in this paper. Regarding the latter, since our construction essentially rewrites the system with a new set of auxiliary variables, the system's solutions in some sense remain the same, except for the \emph{linear} slowdown introduced in Stage 2. It is hoped that an LPP's convergence rate is similar to the input CRN. If an input CRN converges exponentially fast to a number $\alpha$, we desire the translated LPP also converges (exponentially) fast. We would need a thorough stochastic analysis along this line in future work.

The definition of GPAC/CRN-computable numbers and LPP-computable numbers have a major difference: We can trace a real number by a \emph{set} of variables in an LPP, rather than one, as in a GPAC/CRN. Currently, the proof of our main theorem relies on using a set of marked variables. Our question: Is the difference intrinsic? Or can we compute the same set of numbers using \emph{one} variable?

Another area of exploration is protocols with a large population (tending to infinity) that also have ``scarce'' variables with a limited population. The behaviors of these systems, due to the existence of the scarce variables, can not be governed by Kurtz's theorem. Therefore, a new analytic tool or theoretical benchmark (probably not a computable number) awaits development.

Our construction in Section~\ref{sec:e_of_lpp_and_gpacs} is not optimal. To implement the translation algorithm in a software package, one would want to have a better algorithm for Stage 1. To apply the $\lambda$-trick, one needs to know the bounds of the variables, which would require numerical methods. A theory to predict the bounds is desired but difficult to achieve due to the non-linear nature of ODEs.

%!TEX root = ../main.tex
\section*{Acknowledgment}
We thank Titus Klinge for providing the motivating example in Section 1.

We thank the anonymous reviewers for their careful reading of our manuscript and their many
insightful comments and suggestions.

%%%%%%%%%%%%%%%%%%%%%% todo  %%%%%%%%%%%5555555
%%% in the journal version, I need to fix the theorem numbering
%  - Do not number definitions and remarks.
%  - carefully use lemma's and corollary.

%% Please use bibtex,
\bibliographystyle{plain}
\bibliography{master}
 % Appendix
 %   \appendix
 %   \newpage
%\input{src/Appendix}
%% Bibliography
%%{}
\end{document}